\documentclass[journal]{IEEEtran}
\IEEEoverridecommandlockouts
\usepackage{cite}
\usepackage{amsmath,amssymb,amsfonts}
\usepackage{bm}
\usepackage{algorithmic,algorithm}
\usepackage{graphicx}
\usepackage{textcomp}
\usepackage{xcolor}
\usepackage{booktabs}
\usepackage{subcaption}
\usepackage{multirow}
\usepackage{makecell}
\usepackage{subfig}
\usepackage{etoolbox}

\usepackage{enumitem}
\newtheorem{assumption}{Assumption}
\newtheorem{proposition}{Proposition}

\def\BibTeX{{\rm B\kern-.05em{\sc i\kern-.025em b}\kern-.08em
T\kern-.1667em\lower.7ex\hbox{E}\kern-.125emX}}
\begin{document}

\title{Domain-Adaptive Communication-Rate Optimization for Sim-to-Real Humanoid-Robot Wireless XR Teleoperation
\author{Caolu Xu, Zhiyong Chen, \emph{Senior Member, IEEE}, Meixia Tao, \emph{Fellow, IEEE}, Li Song, \emph{Senior Member, IEEE}, \\ Feng Yang, and Wenjun Zhang, \emph{Fellow, IEEE}}

\thanks{The authors are with the Cooperative Medianet Innovation Center, the School of Information Science and Electronic Engineering, Shanghai Jiao Tong University, Shanghai 200240, China (e-mail: \{021034910015, zhiyongchen, mxtao, song\_li, yangfeng, zhangwenjun\}@sjtu.edu.cn). }}
\maketitle

\begin{abstract}
Wireless extended reality (XR) teleoperation provides embodied interaction capability for collecting humanoid robot demonstrations, but the large-scale adoption is restricted by the overhead of high-frequency motion transmission. This paper develops a system framework that integrates sampling, transmission, interpolation, and reconstruction and formulates a communication-rate optimization that aims to minimize the communication energy while maintaining the reconstruction accuracy of robot motion trajectories through dimension-wise sampling-rate control. Since acquiring real-time feedback from physical robots is limited by hardware costs, it is necessary to solve the problem through simulator interaction with offline real-domain data correction. To guide sim-to-real adaptation, we provide a PAC-Bayes generalization characterization that reveals the effects of latent density-ratio estimation, finite-sample deviation, and encoder bias. Building on this analysis, we propose a proximal policy optimization (PPO) method with density-ratio weighting and trust-region regularization. Experiments on public humanoid teleoperation dataset show that the proposed method improves the tradeoff between reconstruction error and communication energy consumption under sim-to-real distribution shift. We further analyze the effectiveness of the proposed algorithm across various wireless channels and dynamic motion trajectories.
\end{abstract}
\begin{IEEEkeywords}
Wireless XR teleoperation, sim-to-real, domain-adaptive, proximal policy optimization.
\end{IEEEkeywords}

\section{Introduction}
Embodied intelligence has recently emerged as a highly promising direction in both industry and academia. Representative paradigms, such as vision-language-action (VLA)~\cite{vla} and world models (WMs)~\cite{wm}, rely on large-scale demonstration data to train generalizable embodied policies. In this context, extended reality (XR) enables wireless teleoperation through immersive interaction and intuitive human interfaces \cite{meta_quest3, apple_vision_pro}, making it particularly suitable for scalable collection of high-quality human demonstrations.

Despite these advantages, wireless XR teleoperation still faces several critical challenges. From the communication perspective, accurate motion reconstruction requires high-frequency motion sampling and transmission. Such frequent updates impose  non-negligible energy consumption on the battery-powered XR device. In addition, wireless channels are inherently time-varying due to user mobility, blockage, and interference, which further complicates reliable low-latency transmission. Different dimensions contribute unequally to robot-side reconstruction fidelity, implying that uniform full-rate transmission of all components can be energy-inefficient.

From the learning perspective, the evaluation metric is the execution performance of physical robots in real-world environments. 
However, large-scale real-world interaction is often impractical for training communication-aware teleoperation policies due to safety constraints, high hardware costs, and limited data collection efficiency. As a result, policy training is typically conducted in simulation environments, where abundant rollouts can be generated efficiently. Nevertheless, the discrepancy between simulated and real-world environments introduces a sim-to-real distribution shift, which may degrade policy performance during real deployment.

These challenges motivate the need to design a wireless XR teleoperation framework that jointly balances device resource consumption and motion reconstruction accuracy, while explicitly accounting for the sim-to-real gap.
\begin{figure*}[t]
\centering{\includegraphics[width=0.8\textwidth]{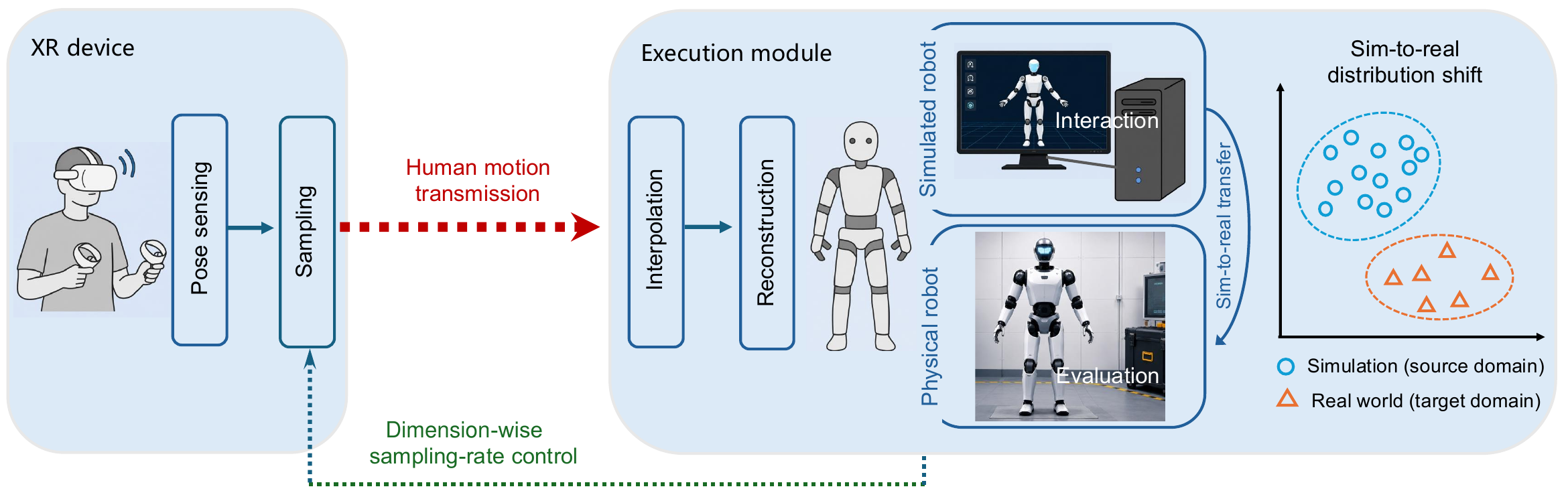}}
\caption{Overview of the proposed sim-to-real wireless XR teleoperation framework for humanoid robot.}
\label{system}
\end{figure*}

\subsection{Related Work}
Recently, several works have investigated wireless XR systems. For XR video streaming, latency optimization of prediction, communication, and computing was studied in~\cite{time_prediction}, while cross-frame resource allocation with proactive caching for 360$^\circ$ video was investigated in~\cite{proactive_caching}. For interactive XR, wireless multiplayer systems with edge computing were modeled in~\cite{VRgame_Zhu}, and an edge-device collaborative framework leveraging foreground-background separation was developed in~\cite{my_interactive_VR}. Beyond XR services, synchronization between physical devices and digital models in the metaverse has been studied to enable timely and accurate model updates~\cite{JSAC_task_oriented_meta,JSAC_sample}. These studies demonstrate the importance of cross-layer resource management for wireless XR. However, most existing works focus on media transmission or system-level coordination, and few have investigated wireless transmission for robot motion reconstruction while explicitly considering the sim-to-real gap in XR teleoperation.

In humanoid robot learning, sim-to-real transfer has become an important topic.
Reinforcement learning (RL) algorithms have been widely used to train humanoid
control policies for whole-body skills, motion imitation, and recovery behaviors.
HoST~\cite{huang2025host} learns standing-up control for humanoids across diverse
postures, demonstrating the potential of RL-based policies for real-world humanoid
recovery tasks. ASAP~\cite{he2025asap} uses real-world data to mitigate dynamics mismatch and improve the transferability of humanoid skills learned in
simulation, while Humanoid Policy~\cite{qiu2025humanoid} studies
cross-embodiment learning from human and humanoid demonstrations. More
recently, DoorMan~\cite{ cvpr2026} investigates
humanoid pixel-to-action policy transfer and shows that vision-based
loco-manipulation policies trained in simulation can be transferred to real
humanoids for articulated-object interaction. 
These methods primarily aim to enhance the transferability of humanoid manipulation and locomotion skills. 
In contrast, our work focuses on communication-rate policy learning in wireless XR teleoperation. Specifically, the policy is trained mainly using simulator rollouts, while offline real-world teleoperation
trajectories are used to characterize deployment bias and mitigate the sim-to-real distribution shift.

Domain adaptation has shown promise in addressing distribution shifts in communication systems, such as mmWave-based human activity recognition~\cite{zhao2025federated}, collaborative learning on mobile edge devices~\cite{zhou2024mecda}, and cross-domain wireless localization under non-line-of-sight conditions~\cite{chen2025crossdomain}. In general, domain adaptation aims to transfer knowledge from a source domain, where effective model training is feasible, to a target domain with a different distribution, by reducing source-target discrepancy. Representative
techniques include instance reweighting and density-ratio estimation
~\cite{KLIEP,uLSIF}, as well as discrepancy-based feature alignment approaches such as maximum mean discrepancy (MMD) and correlation alignment
~\cite{mmd,long2015learning,sun2016deep}. In this paper, domain adaptation provides a principled way to correct simulator-induced bias using offline real-domain trajectories, preserving the effectiveness of communication rate decisions without requiring online interaction with the physical robot during policy training. Compared with existing methods developed for model adaptation under relatively fixed input-output mappings, wireless XR teleoperation involves sequential communication control, where current decisions affect subsequent system states and performance. This motivates domain-adaptive policy learning rather than direct application of conventional domain adaptation methods. 

\subsection{Contributions}
In this paper, we investigate sim-to-real communication-rate optimization in wireless XR teleoperation for humanoid robot. The main contributions are summarized as follows.

\begin{itemize}
\item We propose a sim-to-real wireless XR teleoperation framework that considers dimension-wise motion sampling, energy-aware wireless communication, robot reconstruction, and platform execution. Based on this framework, we formulate a long-term communication rate optimization problem that balances the motion-reconstruction accuracy and device communication energy under sim-to-real distribution shift.
\item We develop a PAC-Bayes \cite{pac} generalization characterization  with latent density-ratio estimation and encoder-induced representation bias. This characterization decomposes the real-domain performance gap into the importance-weighted empirical risk, density-ratio approximation error, finite-sample deviation, and encoder representation bias. This analysis provides theoretical guidance for designing the representation learning and policy optimization procedure.

\item We introduce latent density-ratio correction into proximal policy optimization (PPO) \cite{ppo} for communication-rate optimization and develop a domain-adaptive training framework. First, an encoder is initialized through a MMD-based warm-up on simulator samples and offline real-domain trajectories. The initialized encoder is then frozen to provide a stable latent space for density-ratio estimation and importance-weighted PPO. Finally, with the policy network fixed, the encoder is fine-tuned using trust-region regularization \cite{trust} and weighted MMD, improving latent alignment while preventing uncontrolled representation drift. The experimental results demonstrate that the proposed algorithm achieves a better reconstruction-energy tradeoff than baseline methods under sim-to-real distribution shift.
\end{itemize}

The rest of this paper is organized as follows. Section~II presents the system model and formulates the sim-to-real communication-rate optimization problem. Section~III provides the PAC-Bayes generalization analysis with latent density ratio and representation bias. Section~IV develops the domain-adaptive policy optimization algorithm. Section~V presents simulation results. Section~VI concludes the paper.

\section{System Model and Problem Formulation}
\label{sec:system_model}

\textcolor{black}{In this section, we present a framework for the sim-to-real wireless XR teleoperation system. As illustrated in Fig.~\ref{system}, the XR device first captures human motion trajectories through pose sensing, followed by dimension-wise sampling for communication-efficient transmission. The sampled human motion data are transmitted to the execution module, where interpolation and reconstruction are performed to recover robot motions. 
The system aims to achieve accurate, real-time motion reconstruction on the physical robot while reducing unnecessary device communication energy under time-varying wireless conditions and the sim-to-real distribution shift.}

\subsection{Framework and Cross-System Architecture}
\begin{figure*}[t]
\centering{\includegraphics[width=0.8\textwidth]{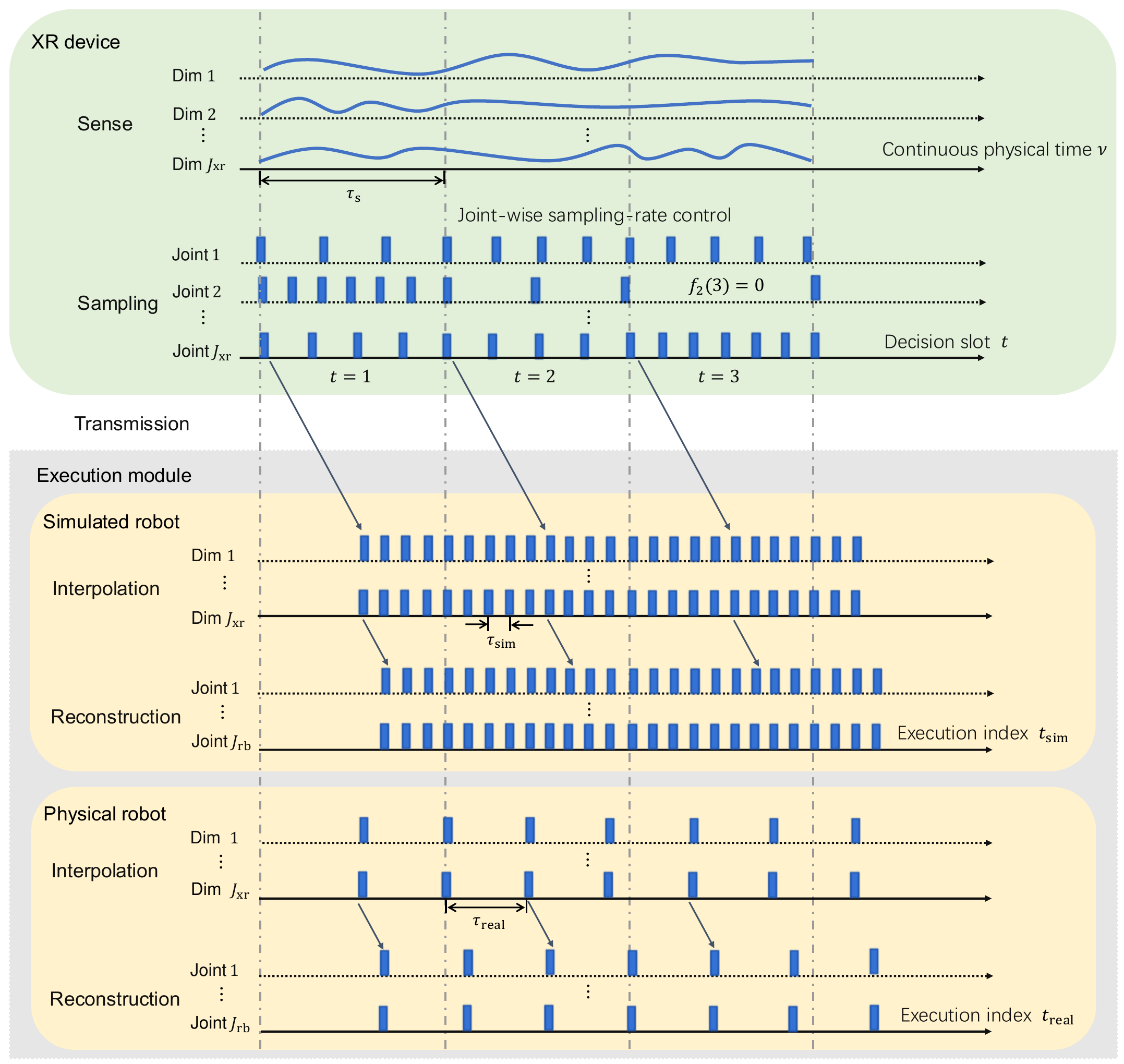}}
	\caption{Timeline of the proposed wireless XR teleoperation framework.}
	\label{timeline}
\end{figure*} 

The timeline of  the proposed framework is shown in Fig. \ref{timeline}. The teleoperation process is described on three time scales: continuous physical time $\nu$ for the raw XR motion, decision slots $t$ for sampling-rate control, and execution indices $t_\dagger$ for simulation or robot updates, where $\dagger\in\{\mathrm{sim},\mathrm{real}\}$ denotes the execution platform. 

The decision process is modeled in discrete slots indexed by $t$, where each slot has duration $\tau_s$ and corresponds to interval
\begin{equation}
\label{eq:decision_interval}
\mathcal{T}_t=[(t-1)\tau_s,\ t\tau_s).
\end{equation}
At the beginning of slot $t$, sampling rates are selected and remain effective over $\mathcal{T}_t$.

\paragraph*{XR and robot configuration spaces}
$\nu\in\mathbb{R}_{+}$ denotes continuous physical time. The human motion trajectory captured by the XR device is denoted by
\begin{equation}
\label{eq:vr_continuous}
\mathbf{q}^{\mathrm{xr}}(\nu)
=
\bigl[
q_1^{\mathrm{xr}}(\nu),\ldots,
q_{J_{\mathrm{xr}}}^{\mathrm{xr}}(\nu)
\bigr]^\top
\in\mathbb{R}^{J_{\mathrm{xr}}},
\ \nu\in\mathbb{R}_{+},
\end{equation}
where $J_{\mathrm{xr}}$ is the number of tracked XR sampling dimension.

The humanoid robot configuration space is represented by
\begin{equation}
\label{eq:robot_space}
\mathbf{q}^{\mathrm{rb}}\in\mathbb{R}^{J_{\mathrm{rb}}},
\end{equation}
where $J_{\mathrm{rb}}$ is the number of controllable robot joints.
In general, $J_{\mathrm{xr}}\neq J_{\mathrm{rb}}$, and the two spaces have different kinematic structures.

\paragraph*{Execution time scales}
The execution platform is indicated by $\dagger\in\{\mathrm{sim},\mathrm{real}\}$,
with $\mathrm{sim}$ and $\mathrm{real}$ corresponding to the simulated and physical robot, respectively. 
 For platform $\dagger$, the execution frequency is $f_{\dagger}$, the execution period is $\tau_{\dagger}=1/{f_{\dagger}}$, and the execution index is $t_{\dagger}\in\mathbb{Z}_{+}$.

To satisfy the update requirements of platform $\dagger$, the XR motion trajectory is aligned to the execution grid.
For the ideal reference trajectory, each sampling dimension is assumed to be transmitted over a lossless channel at its highest admissible rate.
Let $f_j^{\max}$ denote the highest admissible rate of the $j$-th sampling dimension, and let $T_j^{\max}=1/f_j^{\max}$ be the corresponding sampling period.
The reference sampling-time set $\mathcal{V}_j^{\max}$ consists of equally spaced physical instants with interval $T_j^{\max}$.
For each dimension $j$, the ideal interpolated XR trajectory is defined as
\begin{equation}
\label{eq:ideal_vr_aligned_joint}
q_j^{\mathrm{xr}}(t_{\dagger})
=
\mathcal{I}_{j,\dagger}^{\mathrm{ref}}
\!\left(
t_{\dagger}\tau_{\dagger};
\left\{
\bigl(\nu,q_j^{\mathrm{xr}}(\nu)\bigr):
\nu\in\mathcal{V}_j^{\max}
\right\}
\right),
\end{equation}
where $\mathcal{I}_{j,\dagger}^{\mathrm{ref}}(\cdot)$ is the reference interpolation operator that evaluates the $j$-th XR sampling dimension at execution instant $t_{\dagger}\tau_{\dagger}$ from the reference samples. 
Stacking all dimensions gives $\mathbf{q}^{\mathrm{xr}}(t_{\dagger})
=
\bigl[
q_1^{\mathrm{xr}}(t_{\dagger}),\ldots,
q_{J_{\mathrm{xr}}}^{\mathrm{xr}}(t_{\dagger})
\bigr]^\top$.

\paragraph*{Retargeting operator}
The mapping from the XR sample space to the robot joint space is denoted by
\begin{equation}
\label{eq:retarget_operator}
    \mathcal{R}:\mathbb{R}^{J_{\mathrm{xr}}}\rightarrow\mathbb{R}^{J_{\mathrm{rb}}}.
\end{equation}
Then the ideal robot joint trajectory on platform $\dagger$ is 
\begin{equation}
\label{eq:ideal_robot}
    \mathbf{q}^{\mathrm{rb}\star}(t_{\dagger})
    =\mathcal{R}\big(\mathbf{q}^{\mathrm{xr}}(t_{\dagger})\big), \ \dagger\in\{\mathrm{sim},\mathrm{real}\}.
\end{equation}
The specific realization of $\mathcal{R}$, such as inverse kinematics, does not affect the subsequent analysis.

\subsection{Sampling, Transmission, Interpolation, and Reconstruction}
\label{subsec:sampling}
\paragraph*{Dimension-wise sampling rates}
To account for different motion characteristics and transmission requirements across sampling dimensions, we allow the admissible sampling-rate set to depend on the dimension index. At the beginning of the slot $t$, the sampling rate assigned to the $j$-th sampling dimension is denoted by $f_j(t)$, where
\begin{equation}\label{fj_costrant}
f_j(t)\in\mathcal{F}_j=\{0,f_j^{(1)},\ldots,f_j^{(F_j)}\},\  j=1,\ldots,J_{\mathrm{xr}}.
\end{equation}

Here, $\mathcal{F}_j$ is the admissible sampling-rate set of the $j$-th XR sampling dimension, and $F_j$ is the number of nonzero rate levels. The nonzero rate levels are ordered as $0<f_j^{(1)}<\cdots<f_j^{(F_j)}=f_j^{\max}.$ In particular, $f_j(t)=0$ indicates that there is no transmission of dimension $j$ in slot $t$.
The dimension-wise sampling rate in slot $t$ is
\begin{equation}
\mathbf{f}(t)=\bigl[f_1(t),\ldots,f_{J_{\mathrm{xr}}}(t)\bigr]^\top.
\end{equation}

Given $f_j(t)$, let $\mathcal{V}_j(t)\subseteq\mathcal{T}_t$ denote the set of physical time stamps at which samples of the $j$-th XR dimension are received within slot $t$. If $f_j(t)=0$, no sample of dimension $j$ is transmitted and $\mathcal{V}_j(t)=\varnothing.$
If $f_j(t)>0$, $\mathcal{V}_j(t)$ is generated by the sampling clock associated with rate $f_j(t)$ over $\mathcal{T}_t$. Thus, $t$ indexes the decision slot, whereas $\nu\in\mathcal{V}_j(t)$ denotes a physical time stamp used for interpolation of the $j$-th XR sampling dimension.

\paragraph*{Wireless transmission and energy cost}
We consider a orthogonal frequency division multiplexing (OFDM) wireless link with total bandwidth $B$ and use an equivalent-link rate model after subcarrier aggregation. Let $P(t)$ denote the total transmit power of the XR device in slot $t$. The equivalent channel gain $g(t)$ captures the time-varying wireless conditions. We assume that $g(t)$ is available at the beginning of slot $t$. $N_0$ denotes the spectral density of noise power. The transmission rate $C(t)$ in slot $t$ is given by
\begin{equation}
\label{eq:shannon_rate}
C(t)
=
B\log_2\!\left(
1+\frac{P(t)g(t)}{N_0B}
\right).
\end{equation}

Let $\mathbf{d}=[d_1,\ldots,d_{J_{\mathrm{xr}}}]^\top$ denote the payload-size vector for one update of the XR samples, where $d_j$ is the number of bits transmitted for one update of dimension $j$.
Given the sampling-rate vector $\mathbf{f}(t)$ in slot $t$, the transmission load over the slot is
\begin{equation}
\label{eq:payload}
D(t)
=
\tau_s \mathbf{d}^{\top} \mathbf{f}(t).
\end{equation}

The transmission constraint in slot $t$ is given by
\begin{equation}
\label{eq:payload_constraint}
D(t)\le \tau_s C(t).
\end{equation}

Effective communication energy is not limited to the radio frequency (RF)  transmit power. In addition to RF transmit power, packetization,  protocol-related, and other communication pipeline overheads also contribute to the energy consumption of XR device. For tractability, these additional overheads are incorporated through an effective energy coefficient $\zeta\geq 1$. The effective communication energy in slot $t$ is defined as
\begin{equation}
\label{eq:energy_raw}
E(t)=\zeta \tau_s P(t).
\end{equation}

\paragraph*{Sample interpolation}
For the $j$-th XR sampling dimension, the sample received at the execution module at physical instant $\nu\in\mathcal{V}_j(t)$ is denoted by $\tilde q_j^{\mathrm{xr}}(\nu)$.
To match the execution time scale of platform $\dagger$, the received XR samples are interpolated on the execution grid.
For each dimension $j$, the reconstructed XR sample value at execution index $t_{\dagger}$ is obtained from the samples received no later than the execution instant $t_{\dagger}\tau_{\dagger}$
\begin{equation}
\label{eq:interp_joint}
\begin{aligned}
\hat q_j^{\mathrm{xr}}(t_{\dagger})
=
\mathcal{I}_{j,\dagger}
\!\Bigg(
t_{\dagger}\tau_{\dagger};\bigg\{
\bigl(\nu,\tilde q_j^{\mathrm{xr}}(\nu)\bigr):
\nu\in\bigcup_{t\ge 1}\mathcal{V}_j(t),\
\nu\le t_{\dagger}\tau_{\dagger}\bigg\}
\Bigg),
\end{aligned}
\end{equation}
where $\mathcal{I}_{j,\dagger}(\cdot)$ denotes the interpolation operator for dimension $j$ on platform $\dagger$. 
If no prior sample is available, an initial pose is used.
Stacking all reconstructed dimensions gives the interpolated XR trajectory $\hat{\mathbf{q}}^{\mathrm{xr}}(t_{\dagger})
=
\bigl[
\hat q_1^{\mathrm{xr}}(t_{\dagger}),\ldots,
\hat q_{J_{\mathrm{xr}}}^{\mathrm{xr}}(t_{\dagger})
\bigr]^\top$.
 
\paragraph*{Reconstruction error}
The reconstructed robot joint trajectory is obtained by applying the retargeting operator to the interpolated XR motion
\begin{equation}
\label{eq:robot_recon}
\hat{\mathbf{q}}^{\mathrm{rb}}(t_{\dagger})
=
\mathcal{R}\bigl(\hat{\mathbf{q}}^{\mathrm{xr}}(t_{\dagger})\bigr).
\end{equation}

Since the reconstruction fidelity is evaluated on the execution time scale, the joint reconstruction error is defined as
\begin{equation}
\label{eq:joint_error}
e_{\mathrm{joint}}(t_{\dagger})
=
\left(
\frac{
\left\|
\mathbf{W}^{1/2}
\bigl(
\hat{\mathbf{q}}^{\mathrm{rb}}(t_{\dagger})
-
\mathbf{q}^{\mathrm{rb}\star}(t_{\dagger})
\bigr)
\right\|_2^2
}{
\sum_{j=1}^{J_{\mathrm{rb}}} w_j
}
\right)^{1/2},
\end{equation}
where $\mathbf{W}=\mathrm{diag}(w_1,\ldots,w_{J_{\mathrm{rb}}})$ with $w_j\ge 0$ is the joint-importance weighting matrix, and $\mathbf{q}^{\mathrm{rb}\star}(t_{\dagger})$ is the ideal robot joint trajectory defined in \eqref{eq:ideal_robot}.

To obtain a slot-level fidelity measure, the execution-level errors falling within slot $t$ are accumulated over the associated execution indices
\begin{equation}
\label{eq:execution_index_set}
\mathcal{N}_{\dagger}(t)
=
\left\{
t_{\dagger}\in\mathbb{Z}_{+}:
t_{\dagger}\tau_{\dagger}\in[(t-1)\tau_s,\;t\tau_s)
\right\}.
\end{equation}
The reconstruction cost on platform $\dagger$ in slot $t$ is defined as
\begin{equation}
\label{eq:slot_joint_error}
\mathcal{E}_{\dagger}(t)
=\frac{1}{|\mathcal{N}_{\dagger}(t)|}\sum_{t_{\dagger}\in\mathcal{N}_{\dagger}(t)}
e_{\mathrm{joint}}(t_{\dagger}). 
\end{equation}

\subsection{Sim-to-Real Objective and Problem Formulation}
\label{subsec:problem}
Let $\mathcal{M}_{\mathrm{sim}}$ and $\mathcal{M}_{\mathrm{real}}$ denote the simulated and real teleoperation systems, respectively. The objective is to determine the sequence $\{\mathbf{f}(t)\}_{t\ge 1}$ that minimizes the average teleoperation cost on the real system. With the reconstruction cost $\mathcal{E}_{\mathrm{real}}(t)$ in \eqref{eq:slot_joint_error} and the effective communication energy cost $E(t)$ in \eqref{eq:energy_raw}, the real-system performance criterion is defined as
\begin{equation}
\label{eq:real_objective}
J_{\mathrm{real}}
=
\lim_{T\to\infty} \frac{1}{T}\, \mathbb{E}_{\mathcal{M}_{\mathrm{real}}}
\!\left[
\sum_{t=1}^{T}
\left(
\mathcal{E}_{\mathrm{real}}(t)
+
\lambda_E E(t)
\right)
\right],
\end{equation}

where $\lambda_E>0$ is a weighting coefficient that balances teleoperation accuracy and energy consumption, and $\mathbb{E}_{\mathcal{M}_{\mathrm{real}}}[\cdot]$ denotes the expectation with respect to the probability distribution induced by $\mathcal{M}_{\mathrm{real}}$. 

During policy training, the real teleoperation system $\mathcal M_{\rm real}$ is not treated as an interactive environment. Instead, it provides offline
real-domain trajectories, while the simulator $\mathcal M_{\rm sim}$ can be queried repeatedly to generate rollouts under candidate communication rate
policies. The discrepancy between the simulator-induced distribution and the offline real domain distribution is incorporated through a density-ratio
correction.
For each slot $t$, let $\chi_t$ denote the random variable induced by the teleoperation process. The probability distributions of $\chi_t$ under $\mathcal{M}_{\mathrm{real}}$ and $\mathcal{M}_{\mathrm{sim}}$ are denoted by $P_{\mathrm{real},t}$ and $P_{\mathrm{sim},t}$, respectively. \textcolor{black}{Both domains share the same robot kinematic model, action
space, and state representation, while their distributions may differ due
to sensing noise, actuation uncertainty, and unmodeled physical effects.}
Assuming $P_{\mathrm{real},t} \ll P_{\mathrm{sim},t}$, the Radon--Nikodym derivative of $P_{\mathrm{real},t}$ with respect to $P_{\mathrm{sim},t}$ exists\cite{folland1999real}. The density-ratio weight in slot $t$ is defined as
\begin{equation}
\label{eq:rt_def}
\iota_t
=
\frac{\mathrm{d} P_{\mathrm{real},t}}
     {\mathrm{d} P_{\mathrm{sim},t}}(\chi_t),
\end{equation}
which gives the importance weight for converting the expectation under $\mathcal{M}_{\mathrm{sim}}$ into that under $\mathcal{M}_{\mathrm{real}}$.

Accordingly, the sim-to-real problem is formulated as
\begin{align}
\label{eq:problem_p0}
\mathcal{P}0:\ \min_{\mathbf{f}(t),P(t)}
\ 
\lim_{T\to\infty}\frac{1}{T}&\,
\mathbb{E}_{\mathcal{M}_{\mathrm{sim}}}
\!\left[
\sum_{t=1}^{T}
\iota_t
\left(
\mathcal{E}_{\mathrm{sim}}(t)
+
\lambda_E E(t)
\right)
\right] \nonumber\\
\text{s.t.}\quad
&\eqref{fj_costrant}, \eqref{eq:payload_constraint}, \nonumber
\end{align}
where $\lambda_\mathrm{E}>0$ is a coefficient that calibrates the relative scale of the two terms in the unified objective, and $\mathbb{E}_{\mathcal{M}_{\mathrm{sim}}}[\cdot]$ denotes the expectation with respect to the probability distribution induced by $\mathcal{M}_{\mathrm{sim}}$. 

For any given $\mathbf{f}(t)$, the effective communication energy $E(t)$ in \eqref{eq:energy_raw} is monotonically increasing in $P(t)$. Therefore, the minimum energy required to support the transmission load $D(t)$ is attained when \eqref{eq:payload_constraint} holds with equality, 
\begin{equation}
\label{eq:power_min}
P_{\mathrm{req}}(t)
=\frac{N_0 B}{g(t)}
\left(
2^{\frac{D(t)}{B\tau_s}}-1
\right),
\end{equation}
and the corresponding minimum update energy cost is
\begin{equation}
\label{eq:energy_cost}
E_{\mathrm{req}}(t)
=
\zeta \tau_s \frac{N_0 B}{g(t)}
\left(
2^{\frac{D(t)}{B\tau_s}}-1
\right).
\end{equation}

Substituting \eqref{eq:power_min} and \eqref{eq:energy_cost} into $\mathcal{P}0$, an equivalent problem is obtained as
\begin{align}
\label{eq:final_problem}
\mathcal{P}1:\quad
\min_{\mathbf{f}(t)}
\quad
&
\lim_{T\to\infty}\frac{1}{T}\,
\mathbb{E}_{\mathcal{M}_{\mathrm{sim}}}
\!\left[
\sum_{t=1}^{T}
\iota_t
\left(
\mathcal{E}_{\mathrm{sim}}(t)
+
\lambda_E E_{\mathrm{req}}(t)
\right)
\right] \nonumber\\
\text{s.t.}\quad
&
f_j(t)\in\mathcal{F}_j,\quad \forall j,\ \forall t\ge 1. 
\end{align}

\subsection{Reinforcement Learning Reformulation}
\label{subsec:seq_reformulation}
The optimization problem $\mathcal{P}1$ is a stochastic sequential decision problem over an infinite horizon.
Due to the discrete sampling-rate decisions, the nonlinear communication energy model, and the coupled reconstruction cost induced by sampling, interpolation, and reconstruction, it is more suitably addressed through RL. Accordingly, $\mathcal{P}1$ is reformulated as a Markov decision process (MDP). \textcolor{black}{The policy $\pi$ is learned in simulation and deployed at the execution module for slot-wise decision making.}

\paragraph*{State} 
The latest available execution index before the beginning of slot $t$ on platform $\dagger$ is denoted by
\begin{equation}
\label{eq:latest_exec_index}
\bar t_{\dagger}(t)=\max\left\{t_{\dagger}\in\mathbb{Z}_{\ge0}: t_{\dagger}\tau_{\dagger}\le(t-1)\tau_s\right\}.
\end{equation}
Accordingly, the most recent reconstructed robot joint position available at slot $t$ is $\hat{\mathbf{q}}^{\mathrm{rb}}\!(\bar t_{\dagger}(t))$.
$\hat{\mathbf{v}}^{\mathrm{rb}}(t)$ and $\hat{\mathbf{a}}^{\mathrm{rb}}(t)$ denote the current reconstructed robot joint velocity and acceleration, respectively. The reconstructed robot joint velocity and acceleration are defined by 
\begin{equation}
\label{eq:state_velocity}
\hat{\mathbf{v}}^{\mathrm{rb}}(t)
=
\frac{
\hat{\mathbf{q}}^{\mathrm{rb}}\!\left(\bar t_{\dagger}(t)\right)
-
\hat{\mathbf{q}}^{\mathrm{rb}}\!\left(\bar t_{\dagger}(t)-1\right)
}{
\tau_{\dagger}
},
\end{equation}

\begin{equation}
\label{eq:state_acceleration}
\hat{\mathbf{a}}^{\mathrm{rb}}(t)
=
\frac{
\hat{\mathbf{q}}^{\mathrm{rb}}\!\left(\bar t_{\dagger}(t)\right)
-2\hat{\mathbf{q}}^{\mathrm{rb}}\!\left(\bar t_{\dagger}(t)-1\right)
+\hat{\mathbf{q}}^{\mathrm{rb}}\!\left(\bar t_{\dagger}(t)-2\right)
}{
\tau_{\dagger}^{2}
}.
\end{equation}
For the first two execution instants, the missing historical terms are initialized using the initial robot state.

To characterize the dimension-wise transmission load in the previous slot, define
\begin{equation}
\label{eq:state_load_vector}
\boldsymbol{\kappa}(t-1)
=
\tau_s\,\mathbf{d}\odot\mathbf{f}(t-1),
\end{equation}
where $\odot$ denotes the Hadamard product, and $\mathbf{f}(t-1)$ is the sampling-rate vector selected in the previous slot.

The state at slot $t$ is then defined as
\begin{equation}
\label{eq:mdp_state}
x_t=
\left[g(t),
\hat{\mathbf{q}}^{\mathrm{rb}}\!\left(\bar t_{\dagger}(t)\right)^\top,
\hat{\mathbf{v}}^{\mathrm{rb}}(t)^\top, 
\hat{\mathbf{a}}^{\mathrm{rb}}(t)^\top,
\boldsymbol{\kappa}(t-1)^\top
\right]^\top.
\end{equation}

\paragraph*{Action} The action at slot $t$ is the dimension-wise sampling-rate decision
\begin{equation}
\label{eq:mdp_action}
a_t=\mathbf{f}(t).
\end{equation}

\paragraph*{Reward} 
In the simulator $\mathcal{M}_{\mathrm{sim}}$, 
the instantaneous reward is defined as the negative cost
\begin{equation}
\label{eq:mdp_reward}
r_t=-\left(
\mathcal{E}_{\mathrm{sim}}(t)
+
\lambda_E E_{\mathrm{req}}(t)
\right).
\end{equation}
The density-ratio $\iota_t$ in $\mathcal{P}1$ is incorporated as an importance weight in Section \ref{sec:stage1}, rather than being absorbed into $r_t$.

\section{PAC-Bayes Generalization Analysis with Latent Density Ratio and Representation Bias}
\label{section III}
\subsection{Setup and Notation}
\label{PAC setup}
We consider a sim-to-real RL problem induced by the MDP reformulation in Section~\ref{subsec:seq_reformulation}. A parameterized 
encoder $\Omega_\theta$ maps the state $x_t$ in \eqref{eq:mdp_state} into a latent representation 
$z_t=\Omega_\theta(x_t)\in\mathcal{Z}$, and the policy operates in the latent space. For notational simplicity, we write $x\in\mathcal{X}$  and $a$ as generic
realizations of the state $x_t$ in \eqref{eq:mdp_state} and the action $a_t$ in \eqref{eq:mdp_action}, respectively. 
The simulator and the real system induce state distributions over
$\mathcal{X}$, denoted by $P_{\mathrm{sim}}$ and $P_{\mathrm{real}}$,
respectively. 
Since the encoder $\Omega_{\theta}$ maps state samples from both simulator and real domain into a common latent space, it is trained on mixed-domain samples to construct a shared latent representation across the two domains. 

\textcolor{black}{Let $\mathcal{H}$ be a hypothesis class, where each $h \in \mathcal{H}$ comprises the parameters of encoder $\Omega_\theta$, policy $\pi_\psi$, value function $V_\omega$,}
\begin{equation}
\begin{aligned}
h = (\theta, \psi, \omega).
\end{aligned}
\end{equation}

The simulator and real datasets are denoted by 
\begin{equation}
\begin{aligned}
\mathcal{D}_{\dagger}
\!=\!
\{ \tau^{\dagger}_i \}_{i=1}^{N_{\dagger}},
\ 
\tau^{\dagger}_i \!=\! (x^{\dagger}_{i,1},\dots,x^{\dagger}_{i,T_i^{\dagger}}),\ \dagger\!\in\!\{\mathrm{sim},\mathrm{real}\}.
\end{aligned}
\end{equation}
In the dataset $\mathcal{D}_{\dagger}$, $\tau^{\dagger}_i$ is the $i$-th trajectory, $N_{\dagger}$ is the number of trajectories, and
$T_i^{\dagger}$ denotes the length of the $i$-th trajectory.

Define the simulator and real sample sets as
\begin{equation}
\label{sampleset}
    \mathcal{X}_{\dagger}\!
=\!
\{ x^{\dagger}_{i,t} \!\mid\! 1 \!\le\! i \!\le\! N_{\dagger},1\!\le\! t \!\le\! T_i^{\dagger} \},\ \dagger\!\in\!\{\mathrm{sim},\mathrm{real}\},
\end{equation}
with size $n_{\dagger}\!=\!|\mathcal{X}_{\dagger}|$. Here, $t$ denotes the decision-slot index, consistent with Section~\ref{sec:system_model}.

To separate representation warm-up from density-ratio estimation, the simulator samples are divided into a warm-up set $\mathcal S_h$ and a ratio-estimation set $\mathcal S_\iota$
\begin{equation}
    \mathcal{S}_h \cup \mathcal{S}_\iota = \mathcal{X}_{\mathrm{sim}}, 
\ \mathcal{S}_h \cap \mathcal{S}_\iota = \varnothing.
\end{equation}
The encoder warm-up is performed on $\mathcal{S}_h \cup \mathcal{X}_{\mathrm{real}}$,
while density-ratio estimation uses $\mathcal{S}_\iota$ and $\mathcal{X}_{\mathrm{real}}$.

\subsection{Population Risks on Real Domain}
The population risk of a hypothesis $h\in\mathcal{H}$ under distribution $P_{\mathrm{real}}$ is defined as
\begin{equation}
\label{eq:population_risk}
R(h)
=
\mathbb{E}_{x\sim P_{\mathrm{real}}}
\bigl[
L(h,x)
\bigr],
\end{equation}
where $L(h,x)\in[0,1]$ is a bounded loss.

 The slot-level density-ratio weight $\iota_t$ in \eqref{eq:rt_def} is defined on $\chi_t$ for the objective formulation.
For the generalization analysis, we use the real-to-simulator
density ratio over $\mathcal{X}$
\begin{equation}
\label{eq:real_to_sim_ratio}
\iota_{\mathcal{X}}^{\star}(x)
=
\frac{\mathrm{d}P_{\mathrm{real}}}
     {\mathrm{d}P_{\mathrm{sim}}}(x).
\end{equation}
When $x=x_t$, this ratio is evaluated at the state sample generated in
decision slot $t$.

We assume that $P_{\mathrm{real}}$ is absolutely continuous with respect to $P_{\mathrm{sim}}$, denoted by $P_{\mathrm{real}}\ll P_{\mathrm{sim}}$. This assumption means that any event in $\mathcal{X}$ with zero probability under $P_{\mathrm{sim}}$ also has zero probability under $P_{\mathrm{real}}$. It guarantees the existence of the Radon--Nikodym derivative in \eqref{eq:real_to_sim_ratio}. 

By construction, the real-domain risk can be rewritten as
\begin{equation}
\label{eq:real-risk-ratio}
R(h)
=\mathbb{E}_{x\sim P_{\mathrm{sim}}}\big[\iota_{\mathcal{X}}^\star(x)L(h,x)\big].
\end{equation}

Since the policy operates in latent space, for a given encoder
$\Omega_\theta$, we consider the latent distributions
$P_{\mathrm{sim}}^\theta=P_{\mathrm{sim}}\circ\Omega_\theta^{-1}$
and
$P_{\mathrm{real}}^\theta=P_{\mathrm{real}}\circ\Omega_\theta^{-1}$.
The corresponding latent density ratio is denoted by
\begin{equation}
\iota_{\theta}^\star(z)
=
\frac{dP_{\mathrm{real}}^\theta}{dP_{\mathrm{sim}}^\theta}(z),
\ z=\Omega_\theta(x).
\end{equation}
In implementation, an estimator $\hat{\iota}(z)$ is fitted in the latent space using kernel-based density-ratio estimation methods, e.g., uLSIF and KLIEP.
For notational simplicity, we omit the dependence of $\hat\iota$ on the encoder parameter $\theta$ when the encoder is clear from the context.

\subsection{Importance-Weighted PAC--Bayes Deviation Analysis}
\label{III-C}
After the density-ratio estimator $\hat{\iota}$ is fitted using $\mathcal S_\iota$ and $\mathcal X_{\mathrm{real}}$ in the frozen
reference latent space, it is treated as fixed in the subsequent deviation analysis. 
If the raw estimator takes negative values, it is clipped to be nonnegative before normalization, and we still use $\hat{\iota}$ to denote the resulting nonnegative estimator in the deviation analysis.
The analysis is interpreted for a fixed encoder used to construct the latent space at the corresponding training stage; the dependence on the encoder is captured by $\varepsilon_\iota(\theta,\hat{\iota})$ and $\varepsilon_{\mathrm{enc}}(\theta)$.
Let $\mathcal B_{\mathrm{sim}}$ denote a simulator rollout batch collected after fitting $\hat{\iota}$. 
The estimated weights are normalized to have unit mean over $\mathcal B_{\mathrm{sim}}$, and we reuse $\hat{\iota}$ to denote the normalized estimator.
The real-domain risk~\eqref{eq:real-risk-ratio} is approximated by 

\begin{equation}
\label{eq:importance-weighted-real}
\widehat R(h)
=
\frac{1}{|\mathcal B_{\mathrm{sim}}|}
\sum_{x\in\mathcal B_{\mathrm{sim}}}
\hat{\iota}(\Omega_\theta(x))L(h,x).
\end{equation}

Following standard importance-weighted analysis, we decompose

\begin{equation}
\begin{aligned}
R(h)
= &
\mathbb{E}_{x\sim P_{\mathrm{sim}}}[\hat{\iota}(\Omega_\theta(x))L(h,x)]
+\\
&
\mathbb{E}_{x\sim P_{\mathrm{sim}}}[(\iota_{\mathcal{X}}^\star(x)-\hat{\iota}(\Omega_\theta(x)))L(h,x)].
\end{aligned}
\end{equation}
The density-ratio estimation error is defined as
\begin{equation}
\label{eq:err-ratio-final}
\varepsilon_\iota(\theta,\hat{\iota})=\sup_{h\in\mathcal{H}}
\left|
\mathbb{E}_{x\sim P_{\mathrm{sim}}}[(\iota_{\mathcal{X}}^\star(x)-\hat{\iota}(\Omega_\theta(x)))L(h,x)]
\right|.
\end{equation} 

For a data-dependent posterior $Q$ over $\mathcal H$, define
\begin{equation}
R(Q)
=
\mathbb E_{h\sim Q}[R(h)],
\ 
\widehat R(Q)
=
\mathbb E_{h\sim Q}[\widehat R(h)].
\end{equation}

Consider a prior $P$ that is chosen independently of the simulator samples used in the empirical risk estimate, and a data-dependent posterior $Q$ over $\mathcal H$.
For any confidence parameter $\delta\in(0,1)$, let $C>0$ be a universal constant appearing in PAC--Bayes inequalities\cite{pac}. The standard importance-weighted PAC-Bayes analysis gives the following deviation
\begin{equation}
\begin{aligned}
\label{eq:iw-bound-real}
|
\mathbb E_{h\sim Q}
\mathbb E_{x\sim P_{\mathrm{sim}}}
[
\hat{\iota}(\Omega_\theta(x))L(h,&x)
]
-
\widehat R(Q)
|
\lesssim\\
&\sqrt{
\frac{
\mathrm{KL}(Q\|P)+\ln(C/\delta)
}{
2n_{\mathrm{eff}}(\hat{\iota})
}
}.
\end{aligned}
\end{equation}
Here, the effective sample size is computed from the normalized weights as
\begin{equation}
\label{eq:effective-sample-size}
n_{\mathrm{eff}}(\hat{\iota})
=
\frac{
\left(
\sum_{x\in\mathcal B_{\mathrm{sim}}}
\hat{\iota}(\Omega_\theta(x))
\right)^2
}{
\sum_{x\in\mathcal B_{\mathrm{sim}}}
\hat{\iota}^2(\Omega_\theta(x))
}.
\end{equation}

Combining \eqref{eq:importance-weighted-real}--\eqref{eq:effective-sample-size} and introducing a representation discrepancy associated with the learned encoder, we obtain the following real-domain generalization characterization
\begin{equation}
\label{eq:final-bound-real-final}
\begin{aligned}
R(Q)
\lesssim
&\widehat{R}(Q)
+
\varepsilon_\iota(\theta,\hat{\iota})\\
&+
\sqrt{
\frac{
\mathrm{KL}(Q\|P)+\ln(C/{\delta})
}{
2n_{\mathrm{eff}}(\hat{\iota})
}}+
\varepsilon_{\mathrm{enc}}(\theta).
\end{aligned}
\end{equation}
 where $\varepsilon_{\mathrm{enc}}(\theta)$ denotes the representation discrepancy induced by using the learned encoder $\Omega_\theta$ in place of an ideal encoder.

\subsection{Encoder Bias}
Let $\Omega^\star$ denote an ideal encoder under which the real-to-simulator
density ratio is sufficiently smooth and well approximated in the latent space. Define the encoder discrepancy as
\begin{equation}
\label{eq:encoder_discrepancy}
\epsilon_\Omega(\theta)
=
\mathbb E_{x\sim P_{\mathrm{real}}}
[
\|\Omega_\theta(x)-\Omega^\star(x)\|
].
\end{equation}
Assume that the loss $L$ is $L_z$–Lipschitz with respect to the latent representation. Then, by a standard Lipschitz argument,
\begin{equation}
\label{eq:encoder_var}
\varepsilon_{\mathrm{enc}}(\theta)
\le
L_z\epsilon_\Omega(\theta).
\end{equation}

Substituting \eqref{eq:encoder_discrepancy} into \eqref{eq:final-bound-real-final}, we obtain the following instantiated real-domain generalization characterization
\begin{equation}
\label{eq:pretrain-bound-final}
\begin{aligned}
R(Q)
\lesssim&
\widehat{R}(Q)
+
\varepsilon_{\iota}(\theta,\hat{\iota})
\\& +
\sqrt{
\frac{
\mathrm{KL}(Q\|P)+\ln(C/{\delta})
}{
2n_{\mathrm{eff}}(\hat{\iota})
}}+
L_z\,\epsilon_\Omega(\theta).
\end{aligned}
\end{equation}
This characterization identifies four contributing terms to the real-domain generalization error:

(i) the importance-weighted risk of the real domain $\widehat{R}(Q)$;

(ii) density-ratio approximation error $\varepsilon_{\iota}(\theta,\hat{\iota})$, induced by approximating the target ratio $\iota_{\mathcal{X}}^\star$ with the latent-space estimator $\hat{\iota}(\Omega_\theta(x))$; 

(iii) the PAC--Bayes deviation term, which captures finite-sample uncertainty through $\mathrm{KL}(Q\|P)$ and $n_{\mathrm{eff}}(\hat{\iota})$;

(iv) the encoder representation bias $L_z\epsilon_\Omega(\theta)$ induced by the mismatch between $\Omega_\theta$ and $\Omega^\star$.

\section{Sim-to-Real Domain-Adaptive Policy Optimization}
Building upon the PAC-Bayes analysis, we design a structured training framework that addresses the density-ratio estimation error and the encoder representation term identified above. The overall procedure consists of three phases: an encoder warm-up phase followed by two training stages. 
\begin{enumerate}
    \item Warm-up: The encoder is initialized on
    $\mathcal{S}_h \cup \mathcal{X}_{\mathrm{real}}$ by minimizing an MMD based warm-up objective. This phase provides a shared latent representation for simulator and real-domain samples.  
    \item Stage 1: With the encoder frozen at $\theta_0$, a latent real-to-simulator density ratio $\hat{\iota}$ is estimated and used as the importance weight. The communication-rate policy is then optimized in the latent state space using PPO.
    \item Stage 2: The encoder is unfrozen and fine-tuned with a trust-region regularization centered at $\theta_0$ and a latent alignment objective based on weighted MMD. During this stage, the policy
and value networks are kept fixed, and only the encoder parameter \(\theta\) is updated. After encoder fine-tuning, the reference encoder
is updated by setting \(\theta_0\leftarrow\theta\), and policy optimization continues in the next outer iteration.
\end{enumerate}

The illustration of algorithm
architecture is shown in Fig. \ref{algorithm}. The warm-up phase initializes a coherent latent space. Stage~1 fixes the encoder to enable stable density-ratio estimation and reliable policy optimization in the latent space. Stage~2 then allows controlled encoder refinement through trust-region regularization, aiming to improve cross-domain alignment while keeping the latent representation close to the Stage~1 reference space. Together, these stages separate representation learning into a stable initialization phase and a subsequent regulated adaptation phase. The following subsections detail each component of the training framework.

\begin{figure*}[t]
\centering{\includegraphics[width=1\textwidth]{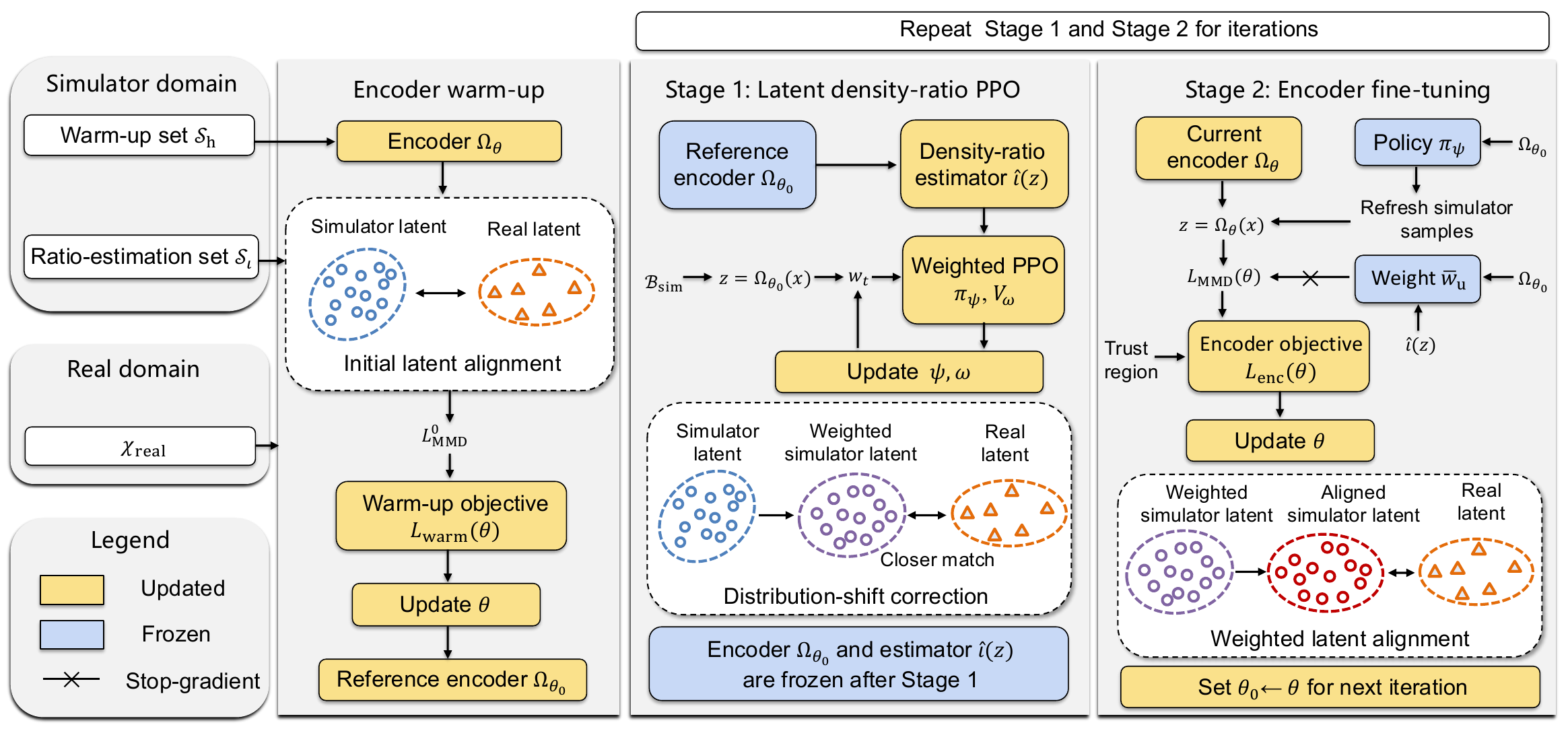}}
\caption{Illustration of proposed sim-to-real domain-adaptive policy optimization algorithm architecture. }
\label{algorithm}
\end{figure*}

\subsection{Encoder Warm-up}
\label{Encoder Warm-up}
We employ a multi-layer perceptron (MLP) as the encoder $\Omega_\theta$.
Given simulator and real-domain minibatches
\(\{\check{x}_u^{\mathrm{sim}}\}_{u=1}^{n_\mathrm{w,s}}\subset\mathcal{S}_h\)\ and
\(\{\check{x}_v^{\mathrm{real}}\}_{v=1}^{n_\mathrm{w,r}}\subset\mathcal{X}_{\mathrm{real}}\), their latent representations
are defined as
\begin{equation}
    \check{z}_u^{\mathrm{sim}}=\Omega_\theta(\check{x}_u^{\mathrm{sim}}),
    \quad
    \check{z}_v^{\mathrm{real}}=\Omega_\theta(\check{x}_v^{\mathrm{real}}).
\end{equation}
We use an MMD loss to reduce the initial discrepancy between
the simulator and real-domain latent distributions:

\begin{align}
\label{eq:warmup_mmd}
L_{\mathrm{MMD}}^{0}(\theta)
=&
\frac{1}{{n_\mathrm{w,s}}^2}
\!\sum_{u, u^{\prime}}\!
\varpi\!
\left(
\check{z}_u^{\mathrm{sim}}\!,
\check{z}_{u'}^{\mathrm{sim}}
\right)
\!-\!
\frac{2}{n_\mathrm{w,s}n_\mathrm{w,r}}\!
\sum_{u, v}\!
\varpi\!
\left(
\check{z}_u^{\mathrm{sim}}\!,
\check{z}_v^{\mathrm{real}}
\right)\nonumber
\\
&+
\frac{1}{{n_\mathrm{w,r}}^2}
\sum_{v,v^{\prime}}
\varpi
\left(
\check{z}_v^{\mathrm{real}},
\check{z}_{v'}^{\mathrm{real}}
\right),
\end{align}
where $\varpi(z,z')$ is a Gaussian mixture kernel defined on the latent space,
\begin{equation}\label{varpi}
    \varpi(z,z') = \sum_{\ell=1}^{L} \alpha_{\ell} \exp\!\left( 
    -\frac{\|z - z'\|^{2}}{2\sigma_{\ell}^{2}}
\right),
\end{equation}
where $\sigma_\ell>0$ controls the kernel width and
$\alpha_\ell>0$ denotes the mixture weight satisfying
$\sum_{\ell=1}^L\alpha_\ell=1$.

The warm-up objective is
\begin{equation}
    L_{\mathrm{warm}}(\theta)
    =
    \lambda_{\mathrm{MMD}}^{0}
    L_{\mathrm{MMD}}^{0}(\theta)
    +\lambda_{\mathrm{reg}}\|\theta\|_2^2,
\label{eq:warmup_objective}
\end{equation}
where \(\lambda_{\mathrm{MMD}}^{0},\lambda_{\mathrm{reg}}\) are nonnegative coefficients and $\|\theta\|_2^2$ is a regularization term. The warm-up on $\mathcal S_h\cup\mathcal X_{\mathrm{real}}$ provides a coherent latent representation across the simulator and the real domains. After optimizing \(L_{\mathrm{warm}}(\theta)\) in the warm-up phase, 
we
denote the encoder parameter by \(\theta_0\), and the
initialized encoder by \(\Omega_{\theta_0}\).

\subsection{Latent-Space PPO}
\label{sec:stage1}
In Stage 1, the general characterization in Section \ref{section III} is instantiated with the frozen encoder $\Omega_{\theta_0}$. All simulator and real samples are first mapped into the corresponding reference latent space. The simulator and real latent sample sets are defined as
\begin{equation}
    \mathcal{Z}_{\dagger}
=
\big\{ z^{\dagger}_{i,t} = \Omega_{\theta_0}(x^{\dagger}_{i,t}) \big\},
\ \dagger\in\{\mathrm{sim},\mathrm{real}\},
\end{equation}
where $i$ indexes trajectories and $t$ denotes the decision slot within each trajectory.

To optimize the sim-to-real control objective in $\mathcal{P}1$, a latent density ratio $\hat {\iota}(z)$ is estimated
\begin{equation}
\label{eq:latent_real_ratio_direct}
\hat{\iota}(z)
\approx
\frac{
\mathrm{d}P_{\mathrm{real}}^{\theta_0}
}{
\mathrm{d}P_{\mathrm{sim}}^{\theta_0}
}(z),
\end{equation}
where $P_{\mathrm{sim}}^{\theta_0}$ and $P_{\mathrm{real}}^{\theta_0}$ are the latent distributions induced by the frozen encoder $\Omega_{\theta_0}$. The estimator $\hat{\iota}(z)$ is fitted using latent samples by uLSIF or KLIEP\cite{uLSIF,KLIEP}. Detailed descriptions 
of uLSIF and KLIEP are provided in the Appendix.

For each iteration, simulator trajectories are collected as $(x_1, a_1, r_1, \ldots, x_{T})$, where the horizon $T$ may vary across trajectories.  
Each state is mapped to latent space $z_t = \Omega_{\theta_0}(x_t)$.
Generalized advantage estimation (GAE) is computed along the trajectory:
\begin{equation}\label{PPO_delta}
    \delta_t 
= r_t + \gamma V_\omega(z_{t+1}) - V_\omega(z_t), 
\end{equation}

\begin{equation}\label{PPO advantage}
    A_t = \sum_{\ell=0}^{T-t} (\gamma\lambda_{\mathrm{GAE}})^\ell \,\delta_{t+\ell}.
\end{equation}

Following the convention in Section~\ref{III-C}, $\hat{\iota}$ denotes the nonnegative density-ratio estimator after clipping when necessary.
The real-to-simulator density ratio provides the normalized importance weight
\begin{equation}\label{PPO important weight}
    w_t = \frac{\hat \iota(z_t)}{\mathbb{E}_t[\hat \iota(z_t)]},
\end{equation}
where $\mathbb{E}_t[\cdot]$ denotes the average over the rollout batch.

The policy and value networks are updated by minimizing the following weighted PPO loss:
\begin{equation}
\begin{aligned}
\label{PPO loss}
    L_{\mathrm{PPO}}(\psi,\omega)
=L_{\mathrm{pol}}(\psi)-\mu_{\mathrm{ent}}L_{\mathrm{ent}}(\psi)+\mu_{\mathrm{val}}L_{\mathrm{val}}(\omega),
\end{aligned}
\end{equation}
where $\mu_{\mathrm{ent}}$ and $\mu_{\mathrm{val}}$ are positive constants; the clipped policy loss is
\begin{equation}
    L_{\mathrm{pol}}(\psi)=-\mathbb{E}_{t}\!\left[
w_t\,
\min\!\big(
\rho_t A_t,
\operatorname{clip}(\rho_t,1-\epsilon,1+\epsilon)A_t
\big)
\right],
\end{equation}
with the importance sampling ratio
\begin{equation}\label{sampling ratio}
    \rho_t 
=
\frac{\pi_\psi(a_t \mid z_t)}
     {\pi_{\psi_{\text{old}}}(a_t \mid z_t)};
\end{equation}
the entropy regularization term is defined as
\begin{equation}
L_{\mathrm{ent}}(\psi)=-\mathbb{E}_t\!\left[\sum_{a_t}
(\pi_\psi(a_t|z_t)\log\pi_\psi(a_t|z_t))
\right];
\end{equation}
the value function term is
\begin{equation}
    L_{\mathrm{val}}(\omega)=\mathbb{E}_t\!\left[
w_t\left(V_\omega(z_{t})- G_t\right)^2
\right],
\end{equation}
where $ G_t=A_t+V_{\omega_{\mathrm{old}}}(z_t)$ denotes the value target constructed from generalized advantage estimation (GAE) \cite{GAE}.

Since the encoder remains fixed at $\theta_0$, the representation bias term $L_z\epsilon_\Omega(\theta_0)$ is fixed, and the density-ratio estimation is performed in a fixed latent space. Thus, the learned posterior $Q_{\text{stage1}}$ follows the characterization in \eqref{eq:pretrain-bound-final}: 
\begin{equation}
\label{stage1 PAC}
\begin{aligned}
R(&Q_{\text{stage1}})\lesssim
\widehat R(Q_{\text{stage1}})
+
\varepsilon_\iota(\theta_0,\hat \iota)
\\& +
\sqrt{
\frac{
\mathrm{KL}(Q_{\text{stage1}}\|P)
+
\ln(C/\delta)
}{
2n_{\mathrm{eff}}(\hat \iota)
}}
+
L_z\,\epsilon_\Omega(\theta_0).
\end{aligned}
\end{equation}

\subsection{Trust-Region Encoder Fine-tuning}
\label{sec:IV_C}
In Stage 2, the encoder 
parameters are unfrozen. To prevent uncontrolled drift in the latent space, which would compromise both policy optimization and density-ratio estimation, we introduce an explicit trust-region regularization. Specifically, a quadratic penalty centered at $\theta_0$ is imposed, where the coefficient $\beta>0$ controls the strength of the trust region,
\begin{equation}
\label{eq:stage2-obj}
\min_{\theta} 
L_{\mathrm{enc}}(\theta)=\lambda_{\mathrm{MMD}}\,L_{\mathrm{MMD}}(\theta)
+ \beta \|\theta - \theta_0\|_2^2.
\end{equation}
Within each outer iteration, $\theta_0$ is kept fixed and serves as the trust-region anchor. Here, $L_{\mathrm{MMD}}(\theta)$ measures the discrepancy between the weighted simulator latent samples and the real latent samples. 

To keep encoder fine-tuning data consistent with the latest policy, we refresh simulator-state minibatches by rolling out $\mathcal{M}_{\rm sim}$ with
the frozen policy
$\pi_\psi(\cdot|\Omega_{\theta_0}(x))$ obtained from Stage 1. Given
minibatches $\{x_u^{\rm sim}\}_{u=1}^{n_{t,s}}$ and
$\{x_v^{\rm real}\}_{v=1}^{n_{t,r}}$, the latent representations are
\begin{equation}
z_u^{\mathrm{sim}}=\Omega_\theta(x_u^{\mathrm{sim}}),
\ 
z_v^{\mathrm{real}}=\Omega_\theta(x_v^{\mathrm{real}}).
\end{equation}
For the simulator minibatch, the stabilized weight is computed in the reference latent space using the encoder $\Omega_{\theta_0}$, under which the density-ratio estimator was fitted
\begin{equation}
\label{eq:normalized_mmd_weight}
\bar w_u =
\frac{
\hat{\iota}(\Omega_{\theta_0}(x^{\mathrm{sim}}_u))
}{
\max\left\{
\sum_{s=1}^{n_{\mathrm{t, s}}}
\hat{\iota}(\Omega_{\theta_0}(x_s^{\mathrm{sim}})),
\epsilon_w
\right\}
},
\end{equation}
where $\epsilon_w>0$ is used only to avoid numerical degeneracy when all estimated weights are close to zero.
The estimator $\hat{\iota}$ is fixed after Stage~1, while the kernel evaluations in \(L_{\mathrm{MMD}}(\theta)\) use the current encoder $\Omega_\theta$. Thus, \(L_{\mathrm{MMD}}(\theta)\) can update the encoder while the importance weights remain tied to the reference latent space.
\begin{align}
\label{MMD2}
L_{\mathrm{MMD}}(\theta)
=
&\sum_{u,u'}
\bar w_u\bar w_{u'}
\varpi\!(z_u^{\mathrm{sim}}\!,z_{u'}^{\mathrm{sim}})\!-\!
\frac{2}{n_{\mathrm{t, r}}}\!
\sum_{u,v}\!
\bar w_u
\varpi\!(z_u^{\mathrm{sim}}\!,z_v^{\mathrm{real}})
\nonumber\\
&+
\frac{1}{{n_{\mathrm{t, r}}}^2}
\sum_{v,v'}
\varpi\!\left(z_v^{\mathrm{real}}\!,z_{v'}^{\mathrm{real}}\right).
\end{align}

The policy and encoder optimization are intentionally decoupled to preserve training stability. In Stage 1, the weighted PPO objective in (\ref{PPO loss}) updates $(\psi,\omega)$, under the frozen reference encoder \(\Omega_{\theta_0}\). In Stage~2, $\pi_\psi$ is a frozen policy with input
$\Omega_{\theta_0}(x)$ to refresh simulator-state minibatches. The objective in (\ref{eq:stage2-obj}) is then optimized using the current encoder $\Omega_\theta$ for the MMD kernel evaluations, while the importance
weights remain computed in the reference latent space. Therefore,
no PPO gradient is propagated into the encoder during Stage~2. This separation avoids mutual interference between policy optimization and latent-space alignment. 

We now establish a stability result showing that bounded encoder drift introduces a controlled perturbation to the real-domain risk. Since Stage 2 freezes the policy and value networks, the posterior variation in this stage is induced only by the encoder parameter
$\theta$, while $\psi$ and $\omega$ are treated as fixed components
inherited from Stage 1.

\begin{assumption}[Lipschitz continuity with respect to encoder parameters]
\label{assump:lipschitz-encoder}
For any two encoders $\Omega_{\theta}$ and $\Omega_{\theta_0}$ satisfying
$\|\theta-\theta_0\|_2\le\varkappa$, and for any policy parameter $\psi$ and
value parameter $\omega$, the loss function satisfies
\begin{equation}
\label{eq:lipschitz_encoder_assumption}
\big|
L((\theta,\psi,\omega),x)
-
L((\theta_0,\psi,\omega),x)
\big|
\le
L_\theta\|\theta-\theta_0\|_2,
\ 
\forall x\sim P_{\mathrm{real}}.\nonumber
\end{equation}
This condition holds, for instance, when the loss is Lipschitz continuous with respect to the latent representation and the encoder is Lipschitz continuous with respect to its parameters.
\end{assumption}

\begin{proposition}[Stability under bounded encoder drift]
\label{prop:stage2-stability}
Let $Q_{\mathrm{stage2}}$ denote the posterior after Stage~2, and let
$Q_{\mathrm{stage2}}^{0}$ denote the posterior obtained from
$Q_{\mathrm{stage2}}$ by replacing the encoder parameter $\theta$ with
$\theta_0$ while keeping $(\psi,\omega)$ unchanged.
Suppose that Stage~2 satisfies
$\|\theta-\theta_0\|_2\le\varkappa$ almost surely under
$Q_{\mathrm{stage2}}$. Then, under Assumption~\ref{assump:lipschitz-encoder},
\begin{equation}
\label{eq:stage2-stability-corrected}
R(Q_{\mathrm{stage2}})
\le
R(Q_{\mathrm{stage2}}^{0})
+
L_\theta\varkappa .
\end{equation}
\end{proposition}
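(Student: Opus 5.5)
The plan is a direct pointwise-to-expectation argument. First, I would make the coupling between $Q_{\mathrm{stage2}}$ and $Q_{\mathrm{stage2}}^{0}$ explicit. Let $\Phi:(\theta,\psi,\omega)\mapsto(\theta_0,\psi,\omega)$ denote the map that replaces the encoder parameter by the anchor $\theta_0$. By construction, $Q_{\mathrm{stage2}}^{0}$ is the pushforward of $Q_{\mathrm{stage2}}$ under $\Phi$. Hence, for any integrable $F$,
\begin{equation*}
\mathbb E_{h'\sim Q_{\mathrm{stage2}}^{0}}[F(h')]=\mathbb E_{h\sim Q_{\mathrm{stage2}}}[F(\Phi(h))].
\end{equation*}
This lets both posterior risks be written as expectations over the same random hypothesis $h=(\theta,\psi,\omega)\sim Q_{\mathrm{stage2}}$:
\begin{equation*}
R(Q_{\mathrm{stage2}})-R(Q_{\mathrm{stage2}}^{0})=\mathbb E_{h\sim Q_{\mathrm{stage2}}}\big[R(h)-R(\Phi(h))\big].
\end{equation*}

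Second, I would bound the integrand for each fixed $h$ in the support. By the definition of $R$ in \eqref{eq:population_risk},
\begin{equation*}
R(h)-R(\Phi(h))=\mathbb E_{x\sim P_{\mathrm{real}}}\big[L((\theta,\psi,\omega),x)-L((\theta_0,\psi,\omega),x)\big].
\end{equation*}
Since $\|\theta-\theta_0\|_2\le\varkappa$ holds $Q_{\mathrm{stage2}}$-almost surely, Assumption~\ref{assump:lipschitz-encoder} applies to the integrand for $P_{\mathrm{real}}$-almost every $x$. Together with Jensen's inequality $|\mathbb E[\cdot]|\le\mathbb E|\cdot|$, this gives
\begin{equation*}
R(h)-R(\Phi(h))\le L_\theta\|\theta-\theta_0\|_2\le L_\theta\varkappa
\end{equation*}
for $Q_{\mathrm{stage2}}$-almost every $h$.

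Third, I would take the expectation over $h\sim Q_{\mathrm{stage2}}$, which yields \eqref{eq:stage2-stability-corrected}. Integrability is not an issue because $L\in[0,1]$, so every risk is finite and the interchange of expectations (Fubini--Tonelli) is justified. As a side remark, one could sharpen the constant to $L_\theta\,\mathbb E_{Q_{\mathrm{stage2}}}\|\theta-\theta_0\|_2$, but that refinement is not needed.

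The only delicate point is the first step: making precise what ``replacing $\theta$ with $\theta_0$'' means for a distribution. I would handle it through the pushforward under $\Phi$ as above, so that $(\psi,\omega)$ keeps exactly its $Q_{\mathrm{stage2}}$ marginal. A second subtlety is that the Lipschitz assumption is quantified over $x\sim P_{\mathrm{real}}$. It therefore only controls the integrand $P_{\mathrm{real}}$-almost surely, which is exactly what the risk under $P_{\mathrm{real}}$ requires. Beyond these two points the argument is routine.
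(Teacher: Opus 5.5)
Your proposal is correct and matches the paper's proof: both apply Assumption~\ref{assump:lipschitz-encoder} pointwise to $h=(\theta,\psi,\omega)$ and $h_0=(\theta_0,\psi,\omega)$, then integrate over $x\sim P_{\mathrm{real}}$ and $h\sim Q_{\mathrm{stage2}}$. Your pushforward coupling via $\Phi$ only makes explicit what the paper leaves implicit in its definition of $Q_{\mathrm{stage2}}^{0}$.
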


\begin{IEEEproof}
For any $h=(\theta,\psi,\omega)$ sampled from $Q_{\mathrm{stage2}}$, define
$h_0=(\theta_0,\psi,\omega)$ by replacing the encoder parameter $\theta$ with
$\theta_0$ while keeping $(\psi,\omega)$ unchanged.
By Assumption~\ref{assump:lipschitz-encoder},
\begin{equation}
\big|
L(h,x)-L(h_0,x)
\big|
\le
L_\theta\|\theta-\theta_0\|_2
\le
L_\theta\varkappa .\nonumber
\end{equation}
Taking expectation over $x\sim P_{\mathrm{real}}$ and
$h\sim Q_{\mathrm{stage2}}$ gives \eqref{eq:stage2-stability-corrected}.
\end{IEEEproof}

Proposition~\ref{prop:stage2-stability} justifies the trust-region design in \eqref{eq:stage2-obj}. 
Stage~2 allows the encoder to adapt through the MMD objective, while the proximal penalty $\beta\|\theta-\theta_0\|_2^2$ discourages large deviations from the Stage~1 encoder. 
This design limits abrupt latent-space drift that could invalidate the learned density-ratio estimator and destabilize policy optimization. 
In practice, $\beta$ is gradually decayed across outer iterations, enabling conservative encoder updates at the early stage and increased flexibility after the policy and density-ratio estimator have stabilized. 

After Stage~2, the reference encoder is updated by setting \(\theta_0\leftarrow\theta\), and the next iteration re-estimates the latent density ratio and continues weighted PPO training under the updated reference encoder.
The overall procedure of the proposed algorithm is presented in Algorithm~\ref{alg:two-stage-rl}.

\begin{algorithm}[t]
\caption{Sim-to-Real Domain-Adaptive Policy Optimization}
\label{alg:two-stage-rl}
\begin{algorithmic}[1]
\STATE \textbf{Initialize:}
Interactive simulator environment $\mathcal{M}_{\mathrm{sim}}$; datasets $\mathcal{D}_{\mathrm{sim}},\mathcal{D}_{\mathrm{real}}$; simulator-sample split $(\mathcal{S}_h,\mathcal{S}_\iota)$; warm-up steps $K_0$; global iterations $K_1$; Stage 1 iterations $K_2$; Stage 2 iterations $K_3$;
trust-region weight $\beta_0$; decay factor $\mu$; regularization coefficient $\lambda_{\mathrm{reg}}$; MMD coefficient $\lambda_{\mathrm{MMD}}^0$; weighted MMD coefficient $\lambda_{\mathrm{MMD}}$.

\STATE \textbf{Warm-up:}\\
\FOR{$k_0=1\cdots K_0$}
   \STATE Sample warm-up minibatches from
    \(\mathcal S_h\) and \(\mathcal X^{\mathrm{real}}\).
    \STATE Update encoder parameter \(\theta\) by minimizing \eqref{eq:warmup_objective}.
\ENDFOR
\STATE Set initial encoder parameter \(\theta_0\leftarrow\theta\).
\STATE Initialize  PPO policy and value parameters $(\psi,\omega)$.
\STATE Set $\beta\gets \beta_0$.

\FOR{$k_1=1\cdots K_1$}

    \STATE \textbf{Stage 1. Frozen encoder with latent density-ratio estimation and PPO:}
    \STATE Freeze encoder at $\theta\gets\theta_0$. 
    \STATE Estimate latent real-to-simulator density ratio $\hat \iota(\cdot)$ using uLSIF or KLIEP in the Appendix.
    
    \FOR{ $k_2=1\cdots K_2$}
    \STATE Roll out simulator trajectories in $\mathcal{M}_{\mathrm{sim}}$ using $\pi_\psi(\cdot\mid z)$ with $z=\Omega_{\theta_0}(x)$, and collect the batch $\mathcal B_{\mathrm{sim}}$.
        \STATE Compute GAE advantages $A_t$ and real-to-simulator importance weights $w_t$ by \eqref{PPO_delta}--\eqref{PPO important weight}.
        \STATE Update $(\psi,\omega)$ by minimizing the weighted PPO objective \eqref{PPO loss} with density ratio \eqref{sampling ratio}.
    \ENDFOR

    \STATE \textbf{Stage 2. Encoder fine-tuning with trust region and weighted MMD:}
    \STATE Unfreeze encoder $\theta$, while keeping \(\theta_0\),
\(\hat{\iota}\), \(\psi\), and \(\omega\) fixed.
    \FOR{ $k_3=1\cdots K_3$}
         \STATE Roll out simulator trajectories in $\mathcal{M}_{\mathrm{sim}}$ using the policy $\pi_\psi(\cdot\mid z)$, where $z=\Omega_{\theta_0}(x)$. Collect simulator minibatches $\{x^{\rm sim}_u\}_{u=1}^{n_\mathrm{t,s}}$, and collect  real-domain minibatches $\{x^{\rm real}_v\}_{v=1}^{n_\mathrm{t,r}}$ from $\mathcal{X}^{\mathrm{real}}$.
\STATE Encode them using the current encoder as $z^{\mathrm{sim}}_u=\Omega_\theta(x^{\mathrm{sim}}_u)$, $z^{\mathrm{real}}_v=\Omega_\theta(x^{\mathrm{real}}_v)$.
    \STATE Compute the stabilized weights \(\bar w_u\) using the frozen \(\Omega_{\theta_0}\) and $\hat{\iota}$ according to \eqref{eq:normalized_mmd_weight}.
    \STATE Compute the weighted MMD loss in \eqref{MMD2}.
        \STATE Update the encoder parameter $\theta$ by minimizing \eqref{eq:stage2-obj}. 
    \ENDFOR
    \STATE Set $\theta_0\gets\theta$.
    \STATE Decay trust-region weight $\beta \gets \mu \beta$.
\ENDFOR

\STATE \textbf{return} encoder $\Omega_\theta$ and policy $\pi_\psi$.
\end{algorithmic}
\end{algorithm}

\section{Simulation Results}\label{simulation_results}
\subsection{Environment Settings}
Since human motion states in XR teleoperation usually evolve continuously over a second-level time scale, we adopt a relatively low-frequency decision frequency to reduce the additional overhead induced by receiving dimension-wise control at the XR device and avoid undesirable control oscillations induced by frequently sampling-rate switching.
Specifically, the decision frequency is 5 Hz, i.e., $\tau_s=200$ ms. Motivated by high-rate XR tracking modes in commercial XR devices, $f_j^{\max}=120$ Hz and $\mathcal{F}_j=\{0,30,60,120\}$ Hz. The execution frequencies of the simulated and real platforms are set to $f_{\mathrm{sim}}=120$ Hz and $f_{\mathrm{real}}=30$ Hz, i.e., $\tau_{\mathrm{sim}}=8.3$ ms,  $\tau_{\mathrm{real}}=33.3$ ms, $T=2000$. All experiments were performed on an NVIDIA RTX 3070 GPU and Ubuntu 22.04.

\subsubsection{Communication parameter setup} Following the 3GPP TR 38.901 indoor-office channel model\cite{3GPP_mmwave}, the path-loss model supports both LOS and NLOS conditions, with the user mobility set to 3 km/h. For room-scale indoor XR teleoperation, we consider the distance in $[3,30]$ m and $f_c=7$ GHz. To account for shadow fading and occasional human-body or device blockage, we generate the channel gain $10\lg g(t)\in[-105,-55]\ {\rm dB}$. $N_0$ is set to -174 dBm/Hz, $B=100$ MHz. $\zeta=3\times10^4$ and $\lambda_\mathrm{E}=5$.

\subsubsection{Real-world and simulator platform}
The real-domain dataset $\mathcal{D}_{\mathrm{real}}$ is constructed based on Humanoid Everyday, a comprehensive public dataset that contains motion data collected by Apple Vision Pro\cite{apple_vision_pro} and humanoid robot data recorded in real-world task scenarios\cite{dataset}. For $\mathcal{D}_{\mathrm{sim}}$, we develop a customized simulation environment named wireless\_xr\_teleoperate based on the unitree\_sim\_isaaclab in the Unitree toolkit xr\_teleoperate\cite{Unitree}. The underlying simulator is built on NVIDIA Isaac Sim 4.5.0 and Isaac Lab. To ensure that the simulator in unitree\_sim\_isaaclab matches the robot embodiment, end-effector type, and task category of the Humanoid Everyday real-world dataset, we use the same Unitree G1-29DoF humanoid robot equipped with Dex3 dexterous hands \cite{Unitree} in both the real-world and simulation domains, and select similar tabletop manipulation tasks. The take\_out\_the\_lid\_of\_the\_spray\_bottle\_g1 scenario in Humanoid Everyday is used to construct $\mathcal{D}_{\mathrm{real}}$. In the unitree\_sim\_isaaclab environment, the Isaac-PickPlace-Cylinder-G129-Dex3-Joint task is used to construct the interactive simulator $\mathcal{M}_{\mathrm{sim}}$ and collect $\mathcal{D}_{\mathrm{sim}}$. 

In Humanoid Everyday, $\mathbf{q}^{\mathrm{xr}}$ includes the 9-dimensional operator head orientation, the 16-dimensional left and right wrist poses, and the 7-dimensional left and right Dex3 dexterous-hand actions, i.e., total dimensions $J_{\mathrm{xr}}=55$. Since the type of each sampling-rate dimension is float64, $d_j=64$ bits. The execution-side $\mathbf{q}^{\mathrm{rb}}$ is constructed from the G1-Dex3 joint states, including 7 joints for each arm, 7 joints for each hand, 6 joints for each leg, and 3 waist joints. Hence, the robot-side has totally $J_{\mathrm{rb}}=43$ joints. Considering that tabletop manipulation task is mainly sensitive to arm and dexterous-hand motions, we set $w_j$ to 1.0 for arm joints, 1.5 for hand joints, 0.5 for leg joints, 0.75 for waist joints in $\mathbf{W}$. Since all trajectories in Humanoid Everyday are recorded at 30 Hz, the motion data collected by Apple Vision Pro are also recorded at 30 Hz. In practice, Apple Vision Pro\cite{apple_vision_pro} can support a 90 Hz sampling rate, while Meta Quest 3\cite{meta_quest3} also provides a 120 Hz high-rate mode. Therefore, we adopt an emulation method of sampling-rate levels. Specifically, four normalized raw-rate levels, i.e., $\{0,10,20,30\}$ Hz, are first constructed from the original 30 Hz motion records through temporal downsampling. These rate levels are then mapped to the target XR communication-rate set $\mathcal{F}_j=\{0,30,60,120\}$ Hz. Under this mapping, the resulting sampled data are further processed through the sampling, transmission, interpolation, and reconstruction pipelines defined in Section \ref{sec:system_model}. This design enables the public 30 Hz dataset to preserve the reconstruction-error differences among different sampling-rate levels, while matching the transmission load of  high-rate XR teleoperation. 
\subsubsection{Algorithm design details} 
$L=5$, $\lambda_\mathrm{MMD}^0=1$, $\lambda_\mathrm{reg}=10^{-5}$, $\alpha_{\ell}=0.2$, $\sigma_{\ell}=2^{\ell-3}\sigma_0$, where $\sigma_0$ is the pairwise distance median after merging simulator and real latent samples in Section \ref{Encoder Warm-up}. $\gamma=0.99$, $\epsilon=0.2$, $\lambda_\mathrm{GAE}=0.95$ in Section \ref{sec:stage1}. $\epsilon_w=10^{-8}$ in Section \ref{sec:IV_C}. Iterations $K_0=500$, $K_1=25$, $K_2=40$, $K_3=10$, trust-region $\beta_0=1$, $\mu=0.95$. 

\subsection{Performance Evaluation}

\begin{figure}
\centering
\begin{subfigure}{1\linewidth}
    \centering
    \includegraphics[height=5cm, keepaspectratio]{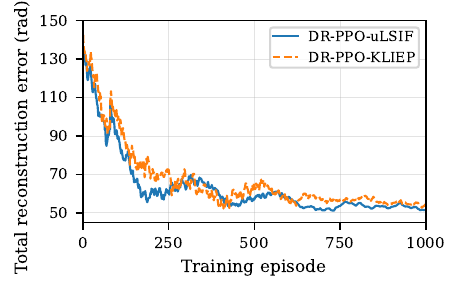}
    \caption{Total reconstruction error in each episode.}
    \label{training_error}
\end{subfigure}
\\  [2ex]
\begin{subfigure}{1\linewidth}
    \centering
    \includegraphics[height=5cm, keepaspectratio]{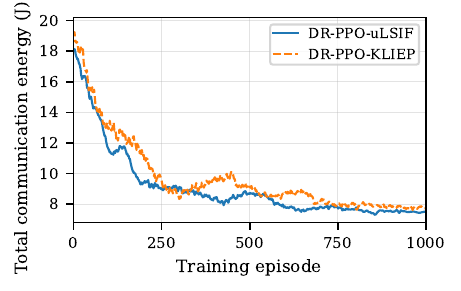}
    \caption{Total effective communication energy in each episode.}
    \label{training_energy}
\end{subfigure}
\caption{Training convergence.}
\label{training_convergence}
\end{figure}
The training convergence is shown in Fig.~\ref{training_convergence}. The total reconstruction error and total effective communication energy are accumulated over $T$ slots within each episode. DR-PPO-uLSIF and DR-PPO-KLIEP denote two weighted PPO algorithms using different density-ratio estimation methods. Both methods eventually reach similar performance. In comparison, DR-PPO-uLSIF converges faster and achieves slightly better performance, whereas DR-PPO-KLIEP exhibits larger fluctuations during training.  The faster and smoother convergence of DR-PPO-uLSIF is mainly attributed to the more stable density-ratio estimates produced by least-squares fitting, while KLIEP is more sensitive to sample variability.

\begin{figure}
\centering
\begin{subfigure}
{1\linewidth}
    \centering
    \includegraphics[height=4.8cm, keepaspectratio]{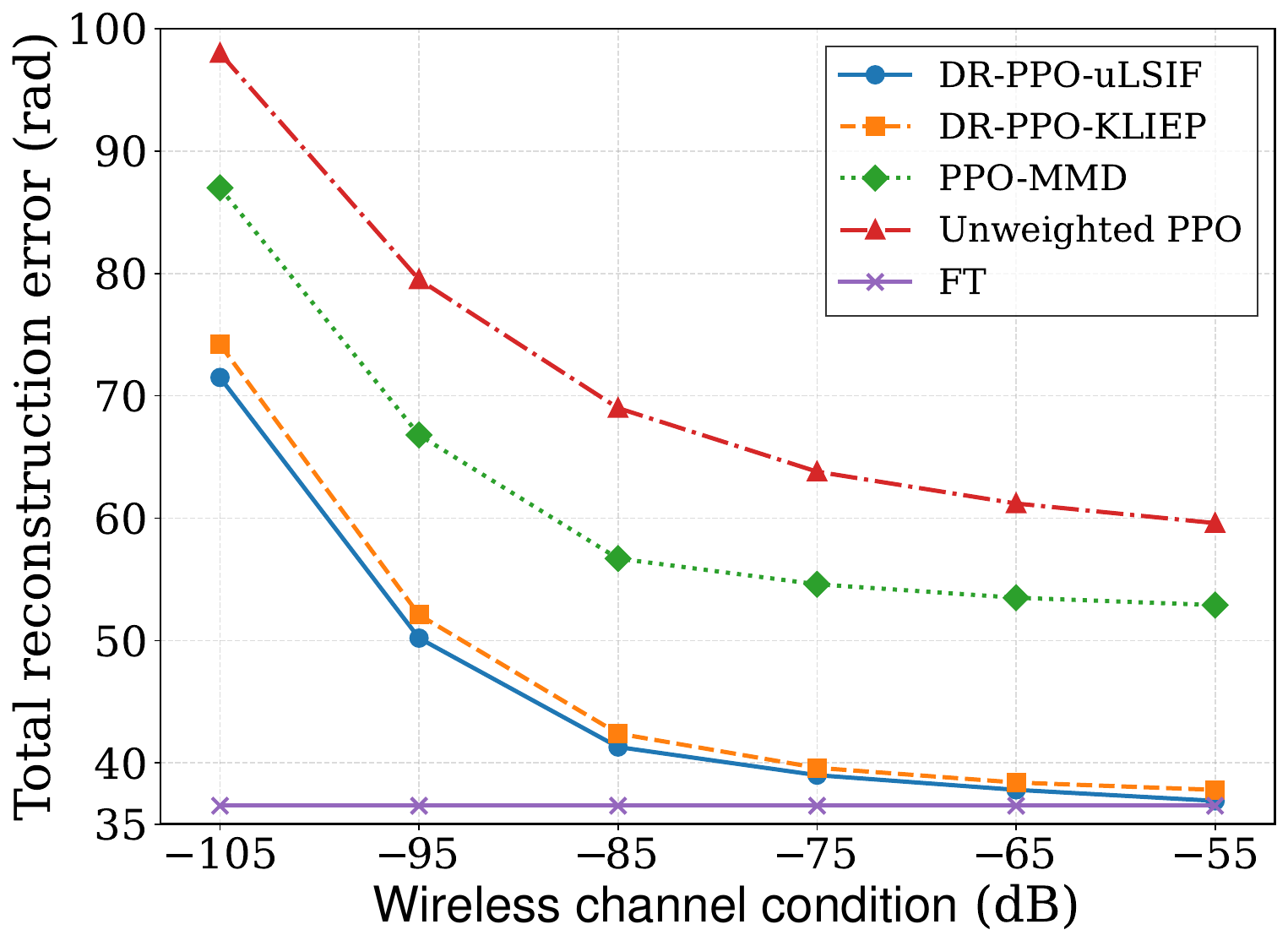}
    \caption{Total reconstruction error of different algorithms.}
    \label{performance_error}
\end{subfigure}
\\  [2ex]
\begin{subfigure}{1\linewidth}
    \centering
    \includegraphics[height=5cm, keepaspectratio]{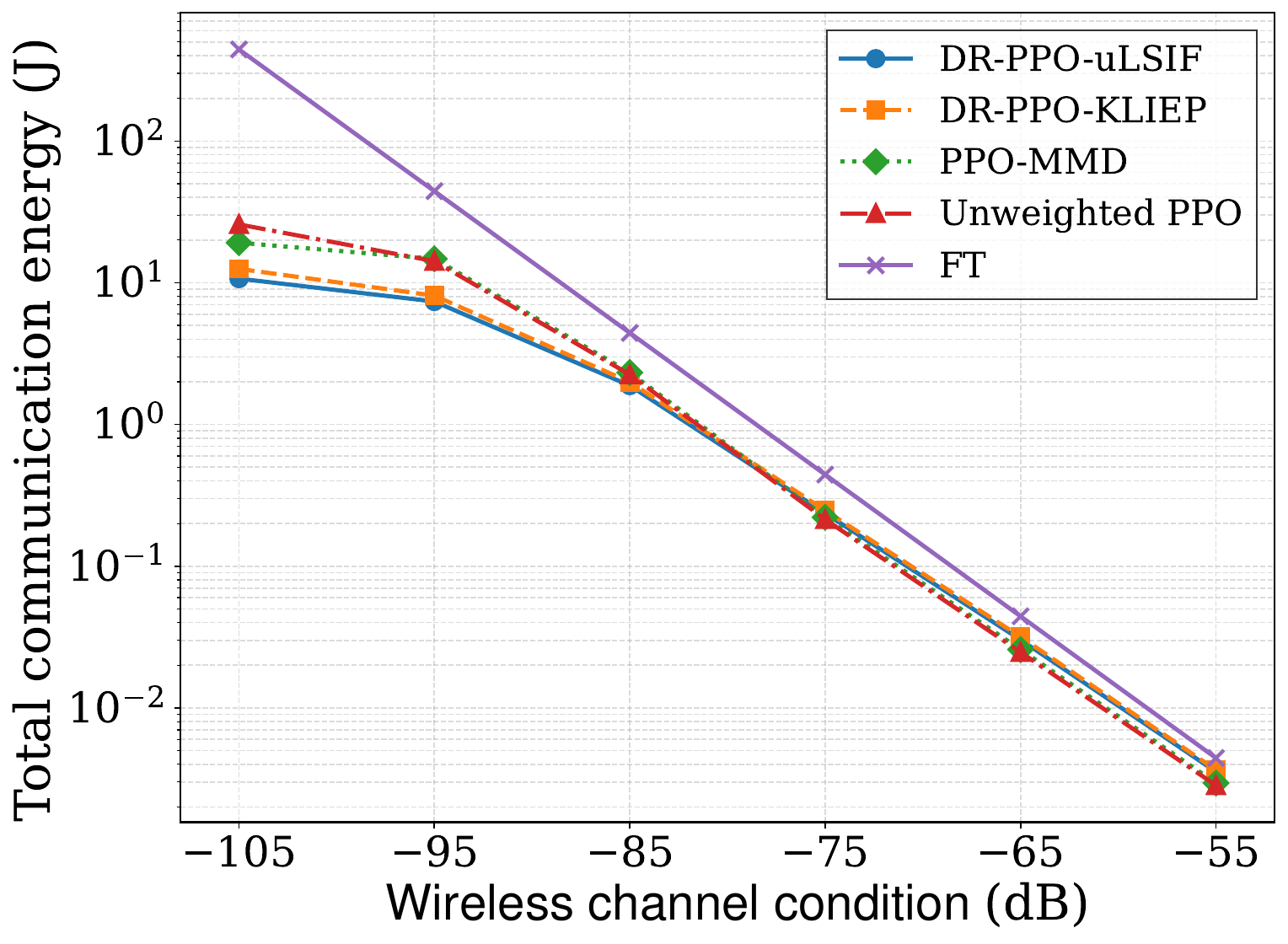}
    \caption{Total effective communication energy of different algorithms.}
    \label{performance_energy}
\end{subfigure}
\caption{Performance comparison under different channel gains.}
\label{Performance}
\end{figure}
Three baselines are included in Fig. \ref{Performance} for ablation. \textbf{PPO-MMD} performs PPO training with MMD-based latent domain but without density-ratio reweighting. \textbf{Unweighted PPO} removes importance weighting and is trained without real-world data. \textbf{FT} directly adopts full-rate transmission without learning policy. The wireless channel condition in dB is given by $10\lg g(t)$. According to Fig.~\ref{performance_error}, the proposed DR-PPO-uLSIF and DR-PPO-KLIEP methods achieve significantly lower reconstruction errors than other policy methods. Unweighted PPO performs the worst due to sim-to-real distribution mismatch. Although FT attains the lowest reconstruction error under severely degraded channel conditions, its energy cost becomes prohibitive. As shown in Fig.~\ref{performance_energy}, when $10\lg g(t)=-105~\mathrm{dB}$, the energy consumption of FT is tens of times higher than that of the policy-learning methods. Moreover, Fig.~\ref{performance_energy} shows that DR-PPO-uLSIF and DR-PPO-KLIEP reduce the effective communication energy to approximately half of that required by Unweighted PPO and PPO-MMD when the channel deteriorates to $-105~\mathrm{dB}$. 
As the channel improves, the proposed methods adaptively exploit the more favorable transmission condition by allocating higher sampling rates, thereby achieving a better tradeoff between reconstruction accuracy and long-term communication energy efficiency.

\begin{figure}
\centering
\begin{subfigure}{1\linewidth}
    \centering
    \includegraphics[height=6.2cm, keepaspectratio]{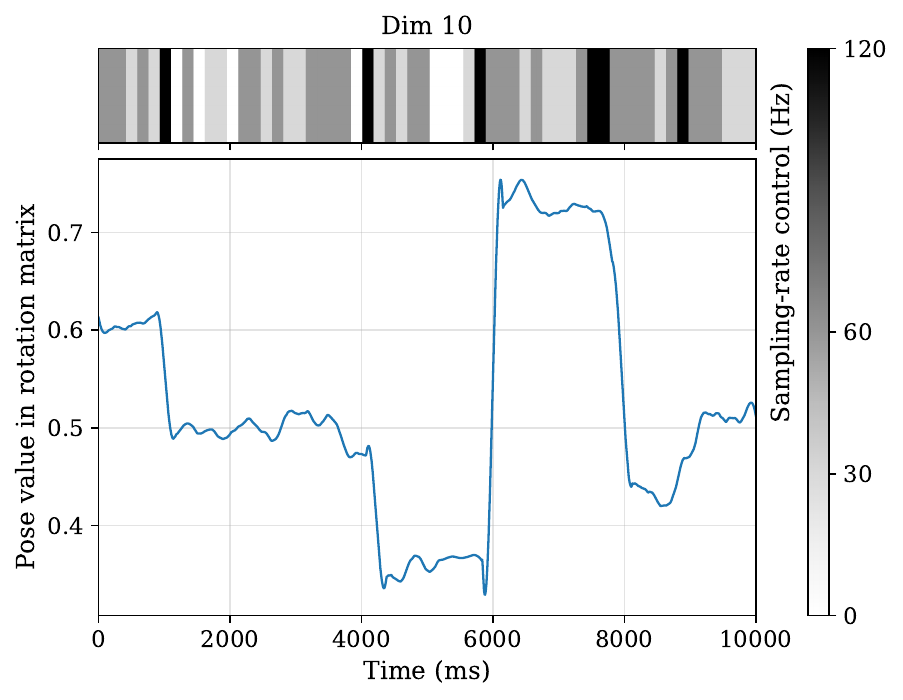}
    \caption{Trajectory and sampling-rate decisions of Dimension 10.}
    \label{dim10}
\end{subfigure}
\\  [2ex]
\begin{subfigure}{1\linewidth}
    \centering
    \includegraphics[height=6.2cm, keepaspectratio]{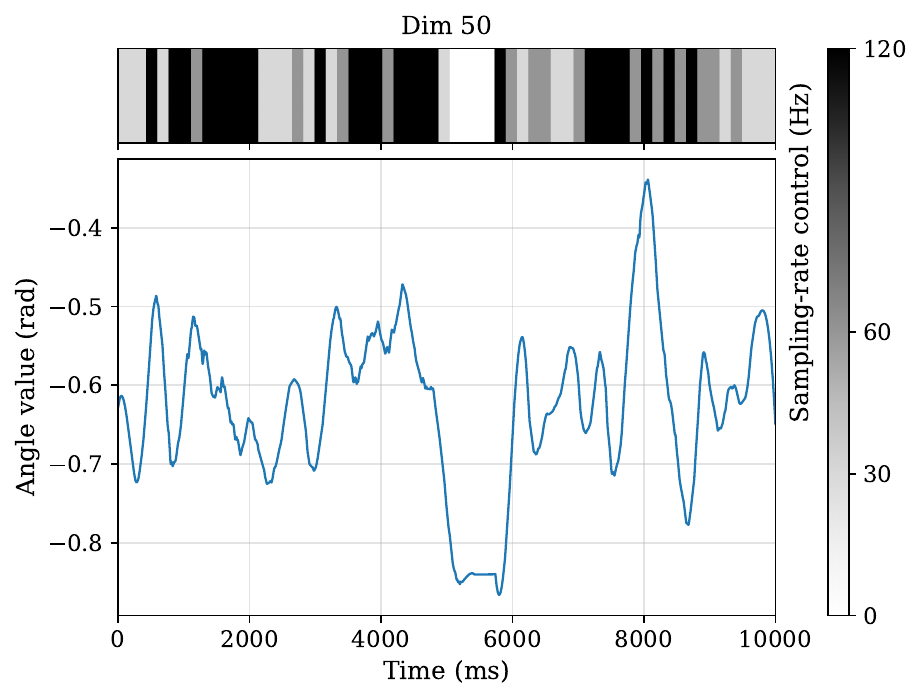}
    \caption{Trajectory and sampling-rate decisions of Dimension 50.}
    \label{dim50}
\end{subfigure}
\caption{Dimension-wise sampling-rate decisions generated by DR-PPO-uLSIF.}
\label{Trajectories}
\end{figure}

Fig.~\ref{Trajectories} present the dimension-wise sampling-rate decisions obtained by the DR-PPO-uLSIF under the channel condition $10\lg g(t)=-80~\mathrm{dB}$. 
The two figures correspond to the same $10000~\mathrm{ms}$ time segment. 
The associated motion trajectories are extracted from take\_out\_the\_lid\_of\_the\_spray\_bottle\_g1, which is held out for validation. By comparing two subfigures, it can be observed that the sampling rates in Fig.~\ref{dim50} are significantly higher than those in Fig.~\ref{dim10}. 
This is because the 10-th dimension, $q^{\mathrm{xr}}_{10}(\nu)$ in Fig.~\ref{dim10}, is the pose value of the left-wrist in rotation matrix, whereas the 50-th dimension, $q^{\mathrm{xr}}_{50}(\nu)$ in Fig.~\ref{dim50} is an angle value of the right Dex3 dexterous hand. 
Compared with the wrist component, the dexterous-hand component contains higher frequency motion, and therefore requires a higher sampling rate to preserve reconstruction accuracy. In Fig.~\ref{dim10}, it can be observed that the sampling rate increases when the pose value exhibits larger temporal variations. 
These results demonstrate that the DR-PPO-uLSIF algorithm can learn motion-adaptive sampling-rate decisions and effectively respond to time-varying XR motion dynamics.

\section{Conclusion}\label{conclusion}

This paper investigated sim-to-real communication-rate optimization for humanoid-robot wireless XR teleoperation. We developed a unified framework integrating motion sampling, wireless transmission, and simulator-physical execution to address the coupled challenges of energy efficiency and sim-to-real mismatch. By establishing a PAC-Bayes generalization characterization based on latent density-ratio estimation and representation bias, we provided theoretical insights into robust policy transfer under distribution shift. Building on this analysis, we proposed a domain-adaptive RL framework with latent density-ratio corrected PPO and trust-region regularization. Experimental results demonstrated superior energy–reconstruction tradeoffs and enhanced robustness under diverse wireless environments and dynamic humanoid motion trajectories. These findings highlight the importance of integrating theoretical generalization analysis with domain-adaptive communication intelligence for future wireless embodied AI systems.

\section*{Appendix: Density Ratio Estimation in\\\ \ Latent Space}
\label{uLSIF and KLIEP}
This appendix summarizes the density-ratio estimation procedures used in Stage~1. The goal is to estimate the real-to-simulator density ratio in the latent space induced by the frozen encoder $\Omega_{\theta_0}$. Let $\{\tilde{x}_u^{\mathrm{sim}}\}_{u=1}^{n_{\iota,s}}$ and 
$\{\tilde{x}_v^{\mathrm{real}}\}_{v=1}^{n_{\iota,r}}$
denote the simulator and real-domain samples used for density-ratio estimation, obtained from $\mathcal S_\iota$ and $\mathcal X_{\mathrm{real}}$, respectively. Their latent
representations under $\Omega_{\theta_0}$ are
\begin{equation}
\tilde{z}_u^{\mathrm{sim}}
=
\Omega_{\theta_0}(\tilde{x}_u^{\mathrm{sim}}), 
\ 
\tilde{z}_v^{\mathrm{real}}
=
\Omega_{\theta_0}(\tilde{x}_v^{\mathrm{real}}).
\end{equation}

For $\dagger\in\{\mathrm{sim},\mathrm{real}\}$, the latent distribution induced by $z=\Omega_{\theta_0}(x)$ with $x\sim P_\dagger$ is defined as
\begin{equation}
\label{eq:latent_pushforward_distribution}
P_{\dagger}^{\theta_0}
=
P_{\dagger}\circ\Omega_{\theta_0}^{-1}.
\end{equation}

Since $P_{\mathrm{real}}\ll P_{\mathrm{sim}}$, the pushforward distributions also satisfy
$P_{\mathrm{real}}^{\theta_0}\ll P_{\mathrm{sim}}^{\theta_0}.$ The target real-to-simulator density ratio in the latent space is defined as
\begin{equation}
\label{eq:latent_target_ratio_appendix}
\iota_{\theta_0}^{\star}(z)
:=
\frac{\mathrm{d}P_{\mathrm{real}}^{\theta_0}}
     {\mathrm{d}P_{\mathrm{sim}}^{\theta_0}}(z).
\end{equation}

To approximate the target ratio in \eqref{eq:latent_target_ratio_appendix}, we use a kernel basis model with $M$ basis functions. Let 
$\xi=[\xi_1,\ldots,\xi_M]^\top$ denote the coefficient vector, where $\xi_m$ is the coefficient of the $m$-th basis function. The parametric density-ratio model is
\begin{equation}
\label{eq:kernel_ratio_expansion}
\iota_{\xi}(z)
=
\sum_{m=1}^{M}\xi_m\phi_m(z),
\ 
\phi_m(z)=\varpi(z,c_m),
\end{equation}
where $\{c_m\}_{m=1}^{M}$ are kernel centers and $\varpi(\cdot,\cdot)$ is the kernel function. The feature vector is defined as
\begin{equation}
\label{eq:kernel_feature_vector}
\boldsymbol{\phi}(z)
=
[\phi_1(z),\ldots,\phi_M(z)]^\top .
\end{equation}

Once the coefficient vector is estimated by uLSIF or KLIEP, denoted by $\hat{\xi}$, the latent density-ratio estimator used in Section~\ref{sec:stage1} is given by $\hat{\iota}(z)=\iota_{\hat{\xi}}(z)$. 
This estimator is the raw density-ratio estimate. Its clipping and normalization are performed separately in Sections~\ref{III-C}, ~\ref{sec:stage1}, and ~\ref{sec:IV_C}.
\subsection*{A.1 uLSIF: Unconstrained Least-Squares Importance Fitting}

uLSIF estimates $\iota_{\theta_0}^{\star}(z)$ by solving the regularized least-squares objective
\begin{equation}
\min_{\mathbf{\xi}}\;
\frac{1}{2} \mathbb{E}_{z\sim  P^{\theta_0}_{\mathrm{sim}}}[ \iota_\mathbf{\xi}(z)^2 ]
-
\mathbb{E}_{z\sim P^{\theta_0}_{\mathrm{real}}}[ \iota_\mathbf{\xi}(z) ]
+ 
\frac{\lambda_\iota}{2}\|\mathbf{\xi}\|_2^2,
\end{equation}
where $\lambda_\iota>0$ is the $\ell_2$ regularization weight.
Using samples $\{\tilde{z}_u^{\mathrm{sim}}\}^{n_\mathrm{\iota,s}}_{u=1}$ and $\{\tilde{z}_v^{\mathrm{real}}\}^{n_\mathrm{\iota,r}}_{v=1}$, we minimize
\begin{equation}
L_{\mathrm{uLSIF}}(\mathbf{\xi})\!
=\!
\frac{1}{2n_\mathrm{\iota,s}}\!\sum_{u=1}^{n_\mathrm{\iota,s}}\! \iota_\mathbf{\xi}(\tilde{z}_u^{\mathrm{sim}})^2\!
-\!
\frac{1}{n_\mathrm{\iota,r}}\!\sum_{v=1}^{n_\mathrm{\iota,r}}\! \iota_\mathbf{\xi}(\tilde{z}_v^{\mathrm{real}})\!
+ \!\frac{\lambda_\iota}{2}\|\mathbf{\xi}\|_2^2.
\end{equation}

Define the empirical matrices
\begin{align}
    H_{m,n}\!=\!\frac{1}{n_\mathrm{\iota,s}}\!\sum_{u=1}^{n_\mathrm{\iota,s}}\!\phi_m(\tilde{z}_u^{\mathrm{sim}})\phi_n(\tilde{z}_u^{\mathrm{sim}}),m,n\!\in\!\{1,\ldots,M\},\\
h_m = \frac{1}{n_\mathrm{\iota,r}}\sum_{v=1}^{n_\mathrm{\iota,r}}\phi_m(\tilde{z}_v^{\mathrm{real}}),\ m\in\{1,\ldots,M\}.
\end{align}
Then, we have
\begin{equation}
    L_{\mathrm{uLSIF}}(\xi) 
= \frac{1}{2}\mathbf{\xi}^\top H\mathbf{\xi} - \mathbf{\xi}^\top h + \frac{\lambda_\iota}{2}\|\mathbf{\xi}\|_2^2,
\end{equation}
\begin{equation}
\hat{\xi} = (H + \lambda_{\iota} I)^{-1} h.
\end{equation}
In implementation, we optionally clip $\iota_{\hat{\xi}}(z)$ to be nonnegative.
\subsection*{A.2 KLIEP: Kullback–Leibler Importance Estimation Procedure}
KLIEP estimates $\iota_{\theta_0}^{\star}(z)$ by minimizing KL divergence
between the target latent distribution and the reweighted simulator latent distribution. Up to constants independent of $\xi$,
\begin{align}
\max_{\xi}\;
\mathbb{E}_{z\sim P^{\theta_0}_{\mathrm{real}}}\!\big[\log \iota_{\xi}(z)\big]\\
\text{s.t.}\quad
\mathbb{E}_{z\sim P^{\theta_0}_{\mathrm{sim}}}\!\big[\iota_{\xi}(z)\big]=1,\\\ \xi \succeq 0.
\end{align}

Using samples $\{\tilde{z}_u^{\mathrm{sim}}\}^{n_{\iota,s}}_{u=1}$ and $\{\tilde{z}_v^{\mathrm{real}}\}^{n_{\iota,r}}_{v=1}$, the objective is denoted by
\begin{align}
\label{eq:kliep-emp}
L_{\mathrm{KLIEP}}(\xi)
&=
-\frac{1}{n_\mathrm{\iota,r}} \sum_{v=1}^{n_\mathrm{\iota,r}} 
\log\!\big(\xi^\top \phi(\tilde{z}_v^{\mathrm{real}})\big)
\nonumber\\&+
\varsigma
\left(\!
\frac{1}{n_\mathrm{\iota,s}}\sum_{u=1}^{n_\mathrm{\iota,s}}\xi^\top\phi(\tilde{z}_u^{\mathrm{sim}})\!-\!1
\!\right)\!^2
+ \frac{\lambda_\iota}{2}\|\xi\|_2^2,
\end{align}
where $\varsigma>0$ controls the constraint penalty, and \eqref{eq:kliep-emp} is under the constraint $\xi \succeq 0$.

The gradient of \eqref{eq:kliep-emp} with respect to $\xi_m$ is given by
\begin{align}
\label{eq:kliep-grad}
\frac{\partial L_{\mathrm{KLIEP}}}{\partial \xi_m}
&=
- \frac{1}{n_\mathrm{\iota,r}} \sum_{v=1}^{n_\mathrm{\iota,r}}
\frac{\phi_m(\tilde{z}_v^{\mathrm{real}})}{\xi^\top\phi(\tilde{z}_v^{\mathrm{real}})}
\nonumber\\
&
+ 2\varsigma
\left(\!
\frac{1}{n_\mathrm{\iota,s}}\sum_{u=1}^{n_\mathrm{\iota,s}}\xi^\top\phi(\tilde{z}_u^{\mathrm{sim}})\!-\!1
\!\right)
\left(\!
\frac{1}{n_\mathrm{\iota,s}}\sum_{u=1}^{n_\mathrm{\iota,s}}\phi_m(\tilde{z}_u^{\mathrm{sim}})
\!\right)
\nonumber\\
&
+ \lambda_\iota \xi_m .
\end{align}

In practice, \eqref{eq:kliep-emp} is optimized by a constrained quasi-Newton method (e.g., L-BFGS-B) to enforce $\xi \ge 0$\cite{L-BFGS-B}.

\bibliographystyle{IEEEtran}
\bibliography{reference}

\end{document}